\newcommand{\ubar}[1]{\underaccent{\bar}{#1}}
\newtheorem{theorem}{Theorem}[section]
\newtheorem{proposition}[theorem]{Proposition}
\newtheorem{lemma}[theorem]{Lemma}
\newtheorem{claim}[theorem]{Claim}
\theoremstyle{definition}
\newtheorem{definition}[theorem]{Definition}
\newtheorem{assumption}[theorem]{Assumption}
\newtheorem{condition}[theorem]{Condition}
\newtheorem{remark}[theorem]{Remark}
\definecolor{backcolour}{rgb}{0.63, 0.79, 0.95}
\lstdefinestyle{mystyle}{
  backgroundcolor=\color{backcolour},
  basicstyle=\ttfamily\footnotesize,
  breakatwhitespace=false,         
  breaklines=true,                 
  captionpos=b,                    
  keepspaces=true,                 
  numbers=left,                    
  numbersep=5pt,                  
  showspaces=false,                
  showstringspaces=false,
  showtabs=false,                  
  tabsize=2
}
\providecommand{\keywords}[1]{\textbf{\textit{Keywords:}} #1}
\providecommand{\jel}[1]{\textbf{\textit{JEL Classifications:}} #1}
\begin{document}
\author{Vasudha Jain\thanks{Indian Institute of Technology Kanpur, Email: \href{mailto:vasudha.jain@utexas.edu}{vasudhaj@iitk.ac.in}} \and Mark Whitmeyer\thanks{Arizona State University, Email: \href{mailto:mark.whitmeyer@gmail.com}{mark.whitmeyer@gmail.com} \newline We thank Gregorio Curello, Claudia Herresthal, Tatiana Mayskaya, Franz Ostrizek, Mallesh Pai, Joseph Whitmeyer and Thomas Wiseman for their comments. We are also grateful to the participants in the Bonn Postdoc Seminar for their feedback. This work was begun while the first author was at UT Austin and the second author was at the Institute for Microeconomics and the Hausdorff Center for Mathematics at the University of Bonn, where he was generously supported by the DFG under Germany's Excellence Strategy-GZ 2047/1, Projekt-ID
390685813.} }

\title{Whose Bias?}

\date{\today}

\maketitle
%We characterize a consumer's optimal search and information acquisition protocol and solve for the unique symmetric equilibrium of the pricing game between firms.

\begin{abstract} 
Law enforcement acquires costly evidence with the aim of securing the conviction of a defendant, who is convicted if a decision-maker's belief exceeds a certain threshold. Either law enforcement or the decision-maker is biased and is initially overconfident that the defendant is guilty. Although an innocent defendant always prefers an unbiased decision-maker, he may prefer that law enforcement have some bias to none. Nevertheless, fixing the level of bias, an innocent defendant may prefer that the decision-maker, not law enforcement, is biased.
\end{abstract}
\keywords{Bias, Information Acquisition, Bayesian Persuasion, Misspecified Beliefs, Overconfidence}\\
\jel{D02, D83, D90, K42} %\newpage \setcounter{tocdepth}{2} \tableofcontents{}
\newpage
\setlength{\epigraphwidth}{3.3in}\begin{epigraphs}
\qitem{A fox should not be of the jury at a goose's trial.}%
      {Thomas Fuller}
\end{epigraphs}

\section{Introduction}
How does bias shape law enforcement's information gathering protocol, as it attempts to secure the conviction of a defendant? We explore a simple three party model. There is a defendant, suspected of a crime; law enforcement, who wants the defendant to be convicted; and a decision-maker, who decides whether to convict the defendant based on the evidence provided by law enforcement. 

In our model, the defendant is either guilty or innocent, and the true prior probability of his guilt is $\mu$. We suppose one of either law enforcement or the decision-maker is biased and has an incorrectly high prior probability that the defendant is guilty--either $\mu_L$ or $\mu_{DM}$ is strictly greater than $\mu$. We address three questions: 1. Must this belief-based bias always hurt an innocent defendant and make it more likely that he is (wrongfully) convicted? 2. Does it matter which party, decision-maker or law enforcement, is biased? 3. Fixing the level of bias, whose bias is worse for an innocent defendant?

These are important questions. Indeed, there is an enormous literature that looks at bias in the legal system. \cite{fryer2019empirical} explores racial differences in police use of force and finds that there are significant differences in how police interact with civilians. \cite{devi2020policing} discover that well-publicized deadly encounters lead to a subsequent change in police investigations. These relate to our model in that they are illustrations of law enforcement's ($L$'s) prejudice. A large portion of the literature documents evidence of the decision-maker's ($DM$'s) bias: \cite{shayo2011judicial}, \cite*{anwar2012impact}, \cite*{abrams2012judges}, and \cite{cohen2019judicial} all investigate racial and gender disparities in sentencing.

In our framework, conviction occurs if the decision-maker's posterior belief about the defendant's guilt exceeds an exogenously specified threshold. Law enforcement is single-minded and only wishes to maximize the probability that the defendant is convicted. In order to obtain a conviction, law enforcement may gather evidence at a cost, which we micro-found via a continuous time Wald problem (\cite{wald1945sequential}). Importantly, all evidence must be disclosed, both positive and negative, as law enforcement is prohibited from hiding exculpatory findings.\footnote{Thus, in our framework, although law enforcement may be biased, it is always \textit{honest}: it must disclose exculpatory evidence, and it may not misrepresent evidence or falsify evidence.}

When law enforcement is biased, we assume that it is sophisticated and is, therefore, aware that the decision-maker's belief about the defendant's guilt is lower than her own. Crucially, it believes that the decision-maker is incorrectly pessimistic, although it is actually law enforcement that is mistaken about the prior probability of guilt.\footnote{That is, when law enforcement is biased, $\mu_{DM} = \mu < \mu_{L}$; but law enforcement (mistakenly) believes $\mu_{L} = \mu > \mu_{DM}$.} Thus, when it collects evidence, law enforcement knows that it needs more evidence than should be required to exceed the conviction threshold (from her point of view) because of the decision-maker's lower prior. Similarly, when the decision-maker is biased, law enforcement is aware of this bias and knows that it needs less evidence to secure a conviction.

We find that an innocent defendant is always hurt by a biased decision-maker--he always prefers the decision-maker's bias to be as low as possible. On the other hand, an innocent defendant may prefer that law enforcement have a slight bias. A biased decision-maker affects law enforcement's information acquisition by lowering the conviction threshold--which always hurts an innocent defendant--whereas a biased law enforcement misperceives the defendant's guilt throughout her dynamic problem, which causes her either to over- or under-acquire information. It is this under-acquisition that benefits the innocent defendant. Nevertheless, despite the potential for beneficial bias when law enforcement is biased, an innocent defendant may still be better off when the decision-maker, not law enforcement, is biased. The over-investigation that arises when law enforcement is prejudiced may be worse than the lower conviction requirements insisted upon by the biased decision-maker.

Although the majority of the paper is focused on belief-based bias, in our penultimate section (Section \ref{whatkindofbias}) we explore \textit{preference-based} bias and investigate how varying the incentives, rather than the beliefs, of the agents affects an innocent defendant's welfare. We find that if it is the decision-maker who is biased, it does not matter whether his bias is preference-based or belief-based: bias effects a decrease in the $DM$'s conviction threshold, which is always bad for the defendant. In contrast, if it is law enforcement who is biased, there is a significant difference between the two types of bias. Preference-based bias shapes the incentives of law enforcement, and is similar in its effects on the defendant's welfare to bias held by the decision-maker: for several interpretations of this type of bias, the innocent defendant always prefers less bias to more bias. %Belief-based bias does not change the (direct) incentives of law enforcement, but instead, as we note above, alters the acquisition problem that law enforcement thinks it is solving.

\subsection{Related Work}

In addition to the empirical papers mentioned above, there are a number of related theoretical papers. Our paper explores a static costly persuasion problem, like that explored in \cite{gentzkow2014costly}, that is equivalent to a dynamic information acquisition problem via the main result of \cite{morris2017wald}. Other related papers include \cite{fernandez2017implications}, who look at the effects of overprecision (a mistaken overconfidence in the precision of a decision-maker's information) on a decision-maker tackling a Wald problem; and \cite{augias2020wishful}, who investigate how one should optimally persuade a ``wishful thinking'' receiver. As in our paper, heterogeneous priors are important in \cite{che2009opinions}, where such differences in opinion increase the incentives for an adviser to acquire information.

Because law enforcement must disclose all of the evidence that it finds, her information acquisition is public. \cite{henry2019research} explore the problem of an agent dynamically acquiring (public) information in order to persuade a decision-maker to take a binary action. They study the effects of the decision-maker's ability to commit to decision rules and the agent's ability to misrepresent her findings. Similarly, \cite{felgenhauer2021face} investigate how an agent can persuade a decision-maker when the information acquisition problem is private and can, moreover, be altered (falsified) at a cost. Evidence production is also public in \cite{brocas2007influence}, in which an agent acquires information to sway a decision-maker with a trio of actions at his disposal. 

Both \cite{henry2009strategic} and \cite{herresthal2017hidden} compare public and private experimentation and encounter the surprising result that private experimentation may be better for the decision-maker. Conversely, \cite{janssen2018whole} also compares public versus private experimentation--where, in contrast to \cite{herresthal2017hidden}, the experimenter is informed and has state-independent preferences--and finds that private experimentation is unambiguously bad for the decision-maker. \cite{felgenhauer2014strategic} and \cite{felgenhauer2017bayesian} both look at persuasion problems in which a sender privately (and sequentially) runs experiments at a cost, the results of which she may, but need not, disclose. A striking finding of the latter paper is that the receiver prefers this private experimentation to public experimentation in all Pareto-undominated equilibria. These works suggest potentially compelling extensions of our paper: how might bias shape outcomes and the innocent defendant's welfare when exculpatory evidence need not be disclosed or when law enforcement can misrepresent or lie? Whose bias is worse for an innocent defendant then?

\section{The Model}

There is a defendant ($D$) who is being investigated for a crime. He has a private type, $\theta \in \left\{-1,1\right\}$ (innocent and guilty, respectively). The true probability that $D$ is guilty, $\mathbb{P}\left(1\right)$, is $\mu \in \left(0,1\right)$. Law enforcement ($L$) wants $D$ to be convicted, and she obtains a payoff of $v > 0$ if $D$ is convicted and a payoff normalized to $0$ if he is acquitted. %We argue that this single-mindedness is realistic: todo

In order to focus on the information/evidence acquisition problem faced by $L$, we simplify the conviction process greatly by assuming that there is a single decision-maker ($DM$) who convicts $D$ if and only if the $DM$'s belief exceeds an exogenously given threshold $a \in \left(\mu, 1\right]$. Both $L$ and the $DM$ have subjective priors about $D$'s guilt: $L$'s subjective belief that $D$ is guilty is $\mu_L$, and the $DM$'s subjective belief that $D$ is guilty is $\mu_{DM}$. We assume that these priors are common knowledge.\footnote{This prevents the sort of signaling through information acquisition present in \cite{jiang2021}.}

Prior to bringing $D$ to trial, $L$ may acquire information at a cost, which she must disclose. We follow \cite{morris2017wald}: $L$ samples sequentially \textit{\`{a} la} \cite{wald1945sequential} in continuous time. Each instant she observes a process $\left(Z_{t}\right)_{t \geq 0}$, where $Z_{t}$ satisfies the following stochastic differential equation:
\[dZ_{t} = \theta dt + \sigma dW_t \text{ ,}\]
where $W_t$ is Brownian motion. Each instant, $L$ may decide to keep sampling or stop acquiring information. Sampling is costly; however, and $L$ incurs a bounded and positive flow cost $c\left(\cdot\right)$ every instant that she observes $Z_{t}$. In particular, $L$ continuously updates her belief as a result of observing $Z_{t}$, and so her posterior belief follows the process $\left(\mu_{t}\right)_{t \geq 0}$. The flow cost is a function of $L$'s posterior belief: if she observes $Z_{t}$ until time $t$ she pays a total cost of $\int_{0}^{t}c\left(\mu_{s}\right)ds$. $L$'s strategy is a stopping time $\tau$. However, rather than solve this optimal-stopping problem directly, we make use of Theorem 1 in \cite{morris2017wald}, where they establish an equivalence between the dynamic Wald problem and a simpler static problem. 

Indeed, consider the static problem of $L$ choosing a distribution over posteriors $F \in \Delta \left(\left[0,1\right]\right)$ subject to a posterior-separable cost $C \colon \Delta \left(\left[0,1\right]\right) \to \mathbb{R}_{+}$. Theorem 1 of \cite{morris2017wald} states that for any flow cost $c$ in the (dynamic) Wald problem, there exists a unique function $\phi$ such that $C\left(F\right) = \int_{0}^{1}\phi(q)dF(q)$ is an equivalent static cost function, where 
\[\label{star}\tag{$\star$}\phi\left(q\right) \coloneqq \int_{\mu_{L}}^{q}\int_{\mu_{L}}^{x}\frac{\sigma^{2}c\left(y\right)}{2\left(y\left(1-y\right)\right)^2}dydx \text{ .}\]
Henceforth, we focus on the static problem, in which $L$ chooses a Bayes-plausible (\cite{kam}) distribution over posteriors, $F$,\footnote{Bayes-plausibility merely requires that $F$ has support on a subset of $\left[0,1\right]$ and that $\mathbb{E}_{F}\left[X\right] = \mu_L$.} subject to a posterior-separable cost $C\left(F\right) = \int_{0}^{1}\phi(x)dF(x)$, where $\phi$ is strictly convex. For tractability we restrict attention to cost functions $\phi$ that are twice continuously differentiable on $\left(0,1\right)$ and that are uniformly posterior separable (\cite{caplin2021rationally}).
\begin{definition}\label{assu1}
A cost function $\phi$ is \textbf{Uniformly Posterior Separable} if it is additively separable in the prior and the posterior.
\end{definition}
This class of cost functions includes the three most common cost functions used in the literature. The entropy-based cost function that has dominated the Rational Inattention literature (see e.g. \cite{sims1998stickiness}, \cite{sims2003implications}, and \cite{matvejka2015rational}),
\[\phi_{Entropy}\left(x\right) = x \log x + \left(1-x\right) \log \left(1-x\right) - \mu_L \log \mu_L - \left(1-\mu_L\right) \log \left(1-\mu_L\right) \text{ ,}\]
is one such additively separable function. As \cite{morris2017wald} show, this cost function corresponds to the dynamic problem in which the flow cost is proportional to the posterior's variance. Another cost function in this class is the log-likelihood cost function that corresponds to the dynamic problem with a constant flow cost:
\[\phi_{LL}\left(x\right) = x \log \frac{x}{1-x} + \left(1-x\right) \log \frac{1-x}{x} - \mu_L \log \frac{\mu_L}{1-\mu_L} - \left(1-\mu_L\right) \log \frac{1-\mu_L}{\mu_L} \text{ .}\]
Yet another is the Tsallis-entropy-based cost function:
\[\phi_{T}\left(x\right) = \frac{\kappa}{q-1} \left(x^q + \left(1-x\right)^q-\mu_L^q - \left(1-\mu_L\right)^q\right) \text{ ,}\]
where $q > 0$ is a scaling parameter and $\kappa > 0$ is a constant.\footnote{This form of entropy was introduced in \cite{tsallis}. \cite{bloedel2018persuasion} and \cite{caplin2021rationally} discuss this cost function in economic contexts.} $\phi_{Entropy}$ is a special case of the Tsallis cost (as $q \to 1$) and the commonly used (posterior-)variance cost function,
\[\phi_{Var}\left(x\right) = \kappa\left(x-\mu_L\right)^2 \text{ ,}\]
is the Tsallis cost when $q = 2$. This cost function corresponds to the dynamic problem in which the flow cost is proportional to the square of the posterior's variance. For some of this paper, we will further assume this particular cost function, which allows us to get a clean closed-form for $L$'s optimal learning, thereby simplifying the analysis.

\iffalse
Before going on, let us briefly discuss the implications of $L$'s potentially misspecified prior. Importantly, $L$ has total faith in her model, and when she is biased ($\mu_L > \mu$) she is unaware of her misspecification throughout her information acquisition problem. \fi

\section{Bias and its Effects}

The true probability that the defendant, $D$, is guilty is $\mu \in \left(0,a\right)$. We are interested in the ramifications of a misspecified prior held by either law enforcement, $L$, or the decision-maker, the $DM$. We begin with the case in which $L$ is mistaken: $\mu_L \geq \mu = \mu_{DM}$.

\subsection{When Law Enforcement is Biased}

We suppose that the decision maker's prior belief about the defendant's likelihood of guilt is \textit{correct}; and moreover, $L$ is aware of their difference in opinions (or beliefs). Consequently, $L$ understands that given a body of evidence that induces belief $x$ from her point of view, the corresponding belief held by the $DM$, $x_{DM}\left(x\right)$ is
\[x_{DM}\left(x\right) = \frac{\mu \left(1-\mu_{L}\right)}{\mu_L\left(1-\mu\right)-x\left(\mu_{L}-\mu\right)}x \text{ .}\]
Obviously, $x_{DM}\left(x\right) \leq x$. $L$ obtains a conviction if and only if the $DM$'s posterior $x_{DM} \geq a$. Equivalently, she obtains a conviction if and only if $L$'s posterior satisfies
\[x \geq a_L \coloneqq \frac{\left(1-\mu\right)\mu_{L}}{\left(\mu_{L} - \mu\right)a + \mu\left(1-\mu_{L}\right)}a \text{ .}\]
It is easy to see that the effective conviction threshold encountered by $L$, $a_{L}$, is strictly increasing in $\mu_{L}$, which naturally implies that $a_L \geq a$, strictly so if $\mu_L > \mu$. This is the portion of the problem in which the belief heterogeneity enters $L$'s objective directly: it scales the target belief it needs to acquire to obtain a conviction.

In her \textit{perceived} static problem, $L$ chooses any distribution over posteriors supported on $[0,1]$ with mean $\mu_L$ to maximize
\[V\left(x\right) \coloneqq \begin{cases}
-\phi\left(x\right), \quad & 0 \leq x < a_L\\
v-\phi\left(x\right), \quad & a_L\leq x \leq 1\\
\end{cases} \text{ .}\]
Because of the strict convexity of $\phi$, there is a uniquely optimal distribution over posteriors $F$. It is either the degenerate distribution, $\mu_{L}$ with probability $1$; or it is a binary distribution with support on $a_L$ and some $b_L$ (with $0 \leq b_L \leq \mu_L$).\footnote{We assume that $v$ is finite and that $\phi$ is such that certainty is prohibitively costly. The entropy and the log-likelihood cost functions both satisfy this assumption. The posterior variance cost also satisfies this assumption provided $\kappa$ is sufficiently large, which we assume (see Condition \ref{conditioninterior}, below)} Solving for the optimal distribution may be done via the concavification technique of \cite{kam} but it is perhaps easier to do it via basic calculus. Indeed, $L$ solves
\[\max_{b_{L},p_{L}}\left\{- \left(1-p_{L}\right) \phi\left(b_{L}\right) + p_{L} \left(v- \phi\left(a_{L}\right) \right)\right\} \text{ ,}\]
subject to
\[0 \leq b_{L} \leq \mu_{L}, \quad 0 \leq p_{L} \leq 1, \quad \text{and} \quad p_{L} = \frac{\mu_{L} - b_{L}}{a_{L}-b_{L}} \text{ .}\]
Substituting in for $p_L$, an interior solution is pinned down by the first order condition
\[\tag{$1$}\label{eq1} 0 = \phi\left(b_{L}\right) + \left(a_{L}-b_{L}\right)\phi'\left(b_{L}\right) + v-\phi\left(a_{L}\right) \eqqcolon h\left(b_{L},\mu_{L}\right) \text{ .}\]

Another way of thinking about $L$'s problem is as choosing conditional probabilities $\gamma \coloneqq \mathbb{P}\left(g | 1\right)$ and $\lambda \coloneqq \mathbb{P}\left(g | -1\right)$. Using Bayes' law, we can write these in terms of $a_{L}$ and $p_{L}$:
\[\label{eq2}\tag{$2$}\gamma = \frac{a_{L}}{\mu_{L}} p_{L}, \quad \text{and} \quad \lambda = \frac{1-a_{L}}{1-\mu_{L}} p_{L} \text{ .}\]
Combining the Equations in \ref{eq2}, we see that 
\[\gamma=\lambda \frac{1-\mu_L}{\mu_L}\frac{a_L}{1-a_L}=\lambda\frac{1-\mu}{\mu}\frac{a}{1-a}>\lambda \text{ .}\]
That is, $\gamma$ is a fixed multiple of $\lambda$, and is; moreover, independent of $\mu_L$. Since the true unconditional probability that a defendant is convicted is $p \coloneqq \mu \gamma + \left(1-\mu\right) \lambda$, we have
\begin{remark}\label{signingbias}
The true unconditional probability of conviction, $p$, is (strictly) increasing in $L$'s prior, $\mu_{L}$; if and only if the true probability of convicting an innocent $D$, $\lambda$, is (strictly) increasing in $\mu_{L}$; if and only if the true probability of convicting a guilty $D$, $\gamma$, is (strictly) increasing in $\mu_{L}$.
\end{remark}
Equivalently, the effect of the false prior on the false positive rate must always be in the same direction as the effect of the false prior on the true positive rate. Thus, a bias level that increases the occurrence of type I errors reduces the rate of type II errors and vice-versa.

In order to get more concrete results, we now specify a flow cost function that is proportional to the square of the posterior variance which yields the variance cost, $\phi_{Var} = \kappa \left(x-\mu_{L}\right)^2$, in $L$'s static problem. We also impose the following condition on the parameters, which guarantees that $L$ never becomes certain that $D$ is innocent.
\begin{condition}\label{conditioninterior}
$a_{DM} > \sqrt{v/\kappa} \eqqcolon d$, where $a_{DM}$ is the cutoff belief $L$ needs to obtain to secure a conviction when the $DM$ is biased.\footnote{We derive $a_{DM} = \frac{\left(1-\mu_{DM}\right)\mu}{\left(\mu - \mu_{DM}\right)a + \mu_{DM}\left(1-\mu\right)}a$, below, when we look at the situation with a biased decision-maker.}
\end{condition}%> 
Furthermore, note that by construction $a_L \geq a \geq a_{DM}$. Intuitively, Condition \ref{conditioninterior} insists that the ratio between reward and information cost cannot be too high.

Substituting the variance cost function into Equation \ref{eq1}, we see that the interior solution is for $L$ to choose a binary distribution with support $\left\{a_L-d, a_L\right\}$, which is uniquely optimal provided $\mu_L \in \left(a_L-d, a_L\right)$. Phrased in terms of the dynamic problem, in the interior solution, $L$ acquires evidence until she reaches belief $x_{h} = a_L$, at which point she takes the evidence to $DM$ to obtain a conviction; or until she reaches belief $x_{l} = a_L - d$, at which point she stops looking for evidence, being too pessimistic about $D$'s guilt to continue. If $\mu_L \leq a_L-d$, $L$ does not acquire any evidence since she believes the odds are overly stacked against the defendant being guilty.

Substituting in for $a_{L}$ and rearranging $\mu_L > a_L - d$, we see that $L$ acquires evidence provided \[d > \frac{\left(a-\mu\right)\left(1-\mu_L\right)\mu_L}{\left(\mu_L-\mu\right)a+\left(1-\mu_L\right)\mu} \eqqcolon \ubar{d} \text{ .}\]

Our next result is somewhat counterintuitive: we show that there are parameters for which a biased $L$ does not acquire evidence (which is to $D$'s benefit since he is therefore convicted with probability $0$) whereas an unbiased $L$ does.

\begin{lemma}\label{when2investigate}
Assume the variance cost, $\phi = \phi_{Var}$. If the prior probability of guilt and the conviction are sufficiently low, $1 > \mu + a$, the minimal reward ratio required for investigation, $\ubar{d}$, is strictly increasing in $L$'s prior, $\mu_L$, for all $\mu_L$ sufficiently close to $\mu$. Otherwise (if $1 < \mu + a$), $\ubar{d}$ is strictly decreasing in $\mu_L$.
\end{lemma}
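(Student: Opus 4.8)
The statement is purely a question about the sign of $\partial \ubar{d}/\partial\mu_L$, so the plan is to differentiate $\ubar{d}$ and analyze that sign. Writing $\ubar{d} = N/D$ with $N(\mu_L) \coloneqq (a-\mu)\mu_L(1-\mu_L)$ and $D(\mu_L) \coloneqq (a-\mu)\mu_L + \mu(1-a)$, the quotient rule gives $\ubar{d}' = (N'D - ND')/D^2$, and since $D>0$ (indeed $D = (a-\mu)\mu_L + \mu(1-a) > 0$ on the admissible range) the sign of $\ubar{d}'$ equals the sign of $N'D - ND'$. The first step is to simplify that numerator. Setting $A \coloneqq a-\mu > 0$ and $B \coloneqq \mu(1-a) \geq 0$ and expanding, the cubic term and several quadratic terms cancel, and the numerator collapses to $A\,q(\mu_L)$ where $q(\mu_L) \coloneqq -A\mu_L^2 - 2B\mu_L + B$. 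Because $A > 0$, we have $\sgn \ubar{d}' = \sgn q(\mu_L)$, so everything reduces to tracking the sign of a single concave quadratic.

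The key step is to evaluate $q$ at $\mu_L = \mu$ and factor it. Substituting $A$ and $B$ and collecting terms, the bracketed expression is $\mu^2 + (a-2)\mu + (1-a)$, whose discriminant in $\mu$ is the perfect square $a^2$, giving roots $1$ and $1-a$; hence I expect the clean factorization $q(\mu) = \mu(\mu-1)(\mu + a - 1)$. Since $0 < \mu < 1$, the product $\mu(\mu-1)$ is strictly negative, so the sign of $q(\mu)$ is exactly the opposite of the sign of $\mu + a - 1$. This delivers the dichotomy immediately: if $1 > \mu + a$ then $q(\mu) > 0$, so $\ubar{d}'(\mu) > 0$, and by continuity of $\ubar{d}'$ in $\mu_L$ we get $\ubar{d}' > 0$ (hence $\ubar{d}$ strictly increasing) for all $\mu_L$ sufficiently close to $\mu$; while if $1 < \mu + a$ then $q(\mu) < 0$ and $\ubar{d}'(\mu) < 0$.

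For the ``otherwise'' case I would then upgrade this local conclusion to the global one the lemma claims. Here the concavity of $q$ does the work: $q$ is a downward-opening parabola with $q(0) = B \geq 0$, so it has a unique positive root $\mu_L^\ast$, being positive on $(0,\mu_L^\ast)$ and negative on $(\mu_L^\ast,\infty)$. The fact that $q(\mu) < 0$ together with $\mu > 0$ forces $\mu > \mu_L^\ast$; consequently $q(\mu_L) < 0$ for every $\mu_L \geq \mu$, giving $\ubar{d}' < 0$ throughout the admissible range and hence $\ubar{d}$ strictly decreasing. This also explains why the increasing statement is only local: when $1 > \mu + a$ we instead have $\mu < \mu_L^\ast$, so $q$ changes sign once $\mu_L$ grows past $\mu_L^\ast$.

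The main obstacle is really just the bookkeeping in the first two steps: one must check that the expansion of $N'D - ND'$ genuinely collapses to the two-term quadratic $q$ (all higher-order terms cancelling), and then spot the factorization $q(\mu) = \mu(\mu-1)(\mu+a-1)$ that isolates the threshold $\mu + a = 1$. Once that factorization is in hand, both directions of the monotonicity, and the reason one holds globally while the other only locally, follow with essentially no further computation.
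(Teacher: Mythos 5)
Your proposal is correct and follows essentially the same route as the paper: your quadratic $q(\mu_L) = -(a-\mu)\mu_L^2 - 2\mu(1-a)\mu_L + \mu(1-a)$ is literally the paper's $t(\mu_L)$, and your factorization $q(\mu) = \mu(\mu-1)(\mu+a-1)$ is exactly the paper's evaluation $t(\mu) = (1-\mu)\mu(1-\mu-a)$, with your downward-parabola argument playing the role of the paper's observation that $t$ is strictly decreasing (plus the intermediate value theorem). The only nit is the degenerate case $a=1$, where $B=0$ and $q$ has no positive root, but there $q(\mu_L) = -(a-\mu)\mu_L^2 < 0$ for all $\mu_L \geq \mu$, so your global conclusion in the $1 < \mu + a$ case still holds.
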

\textit{Viz.}, a biased $L$ may actually investigate (and therefore convict) defendants less frequently than a non-biased $L$. Active information acquisition requires that $d > a_L - \mu_L$ and so $\ubar{d}$ is strictly increasing in $\mu_L$ if and only if $a_L - \mu_L$ is strictly increasing in $\mu_L$. Moreover, $p_L = \left(\mu_L + d-a_L\right)/d$, and so it is easy to see that $\ubar{d}$ is strictly increasing in $\mu_L$ if and only if $p_L$ is strictly decreasing in $\mu_L$. This is intuitive: crossing the threshold from active information acquisition to no information acquisition requires that $L$ shrink from positive to negative, the latter of which is obviously impossible, requiring a corner solution (no information acquisition).

The key term in the expression for $p_L$ is $\mu_L - a_L$. $a_L$ is strictly increasing in $\mu_L$ and so this second term allows $\mu_L$ to exert a negative influence on $p_L$. The first term, obviously, is where $\mu_L$ has a positive influence. It is when the second term (and $\mu$ and $\mu_L$) is (are) small that $a_L$ reacts strongly to a change in $\mu_L$, yielding the result. That $\mu_L$ cannot be too extreme is a common theme throughout our results: a severe bias is bad for $D$, regardless of who is biased. 

Turning our attention to the case when there is information acquisition, we may use the equations in \ref{eq2} to derive explicit expressions for $\gamma$ and $\lambda$, which are
\[\gamma = \left(\frac{a_L}{\mu_L}\right)\frac{\mu_L + d-a_L}{d}, \quad \text{and} \quad \lambda = \left(\frac{1-a_L}{1-\mu_L}\right)\frac{\mu_L + d-a_L}{d} \text{ .}\]
In the same spirit as Lemma \ref{when2investigate}, the next theorem details when $D$ benefits from $L$'s bias.

\begin{theorem}\label{mainthm1}
Assume the variance cost, $\phi = \phi_{Var}$. 
\begin{enumerate}[label=\roman*]
    \item Let the correct prior, reward ratio, and conviction threshold ($\mu$, $d$, and $a$, respectively) be such that if she were not biased $L$ would acquire information ($\mu > a - d$). If $a$ is sufficiently low and $d$ is sufficiently high ($2 a - d < 1$), then for all $\mu_L$ sufficiently close to $\mu$, an innocent $D$ (strictly) prefers for $L$ to be biased rather than unbiased. Otherwise, an innocent $D$ prefers for $L$ to be unbiased rather than biased, for any amount of bias.
    \item If $\mu$, $d$, and $a$ are such that if she were not biased $L$ would \textit{not} acquire information ($\mu \leq a - d$), an innocent $D$ (weakly) prefers for $L$ to have no bias.
\end{enumerate}
\end{theorem}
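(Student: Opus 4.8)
The plan is to reduce the innocent defendant's welfare to the single quantity $\lambda$, the \emph{true} probability that an innocent $D$ is convicted, which he wants as small as possible. Because $\gamma = \lambda\frac{1-\mu}{\mu}\frac{a}{1-a}$ is a fixed positive multiple of $\lambda$ (hence so is the true unconditional conviction rate $p$), by Remark \ref{signingbias} it suffices to sign $d\lambda/d\mu_L$. Part (ii) is then immediate: if $\mu \le a - d$ an unbiased $L$ does not investigate, so $\lambda = 0$ at $\mu_L = \mu$, whereas $\lambda \ge 0$ for every $\mu_L$; hence no bias is (weakly) optimal for an innocent $D$.

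For Part (i) I would first obtain a closed form for $\lambda(\mu_L)$ on the investigation region. Writing $M \coloneqq \mu_L(a-\mu) + \mu(1-a)$ for the denominator of $a_L$, the two simplifications $1 - a_L = \mu(1-a)(1-\mu_L)/M$ and $\mu_L + d - a_L = \left[(a-\mu)\mu_L(\mu_L + d - 1) + d\mu(1-a)\right]/M$ collapse the expression $\lambda = (1-a_L)(\mu_L + d - a_L)/\left((1-\mu_L)d\right)$ into the rational function
\[\lambda(\mu_L) = \frac{\mu(1-a)}{d}\cdot\frac{N(\mu_L)}{M^2}, \qquad N(\mu_L) \coloneqq (a-\mu)\mu_L(\mu_L + d - 1) + d\mu(1-a) \text{ .}\]
Differentiating and clearing the positive factors $\mu(1-a)/d$ and $M^{3}$, the sign of $\lambda'$ equals that of $N'M - 2NM' = (a-\mu)\,\Phi(\mu_L)$, where $\Phi(\mu_L) \coloneqq (2\mu_L + d - 1)M - 2N$ is \emph{affine} in $\mu_L$; since $a > \mu$, the sign of $\lambda'$ is exactly the sign of $\Phi$.

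The crux is then to read off the behavior of $\Phi(\mu_L) = \alpha\mu_L + \beta$. Evaluating at the true prior yields the clean factorization $\Phi(\mu) = \mu(1-\mu)(2a - d - 1)$, while its constant term is $\beta = -\mu(1-a)(d+1) < 0$. If $2a - d < 1$ then $\Phi(\mu) < 0$, so $\lambda'(\mu) < 0$ and $\lambda$ strictly decreases just above $\mu$; thus $\lambda(\mu_L) < \lambda(\mu)$ for $\mu_L$ slightly above $\mu$, i.e. a small bias strictly helps $D$. If instead $2a - d \ge 1$ then $\Phi(\mu) \ge 0$; combined with $\beta < 0$ this forces the slope $\alpha > 0$ (from $\alpha\mu \ge -\beta > 0$), so $\Phi$, and hence $\lambda'$, is strictly positive for every $\mu_L \ge \mu$. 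Consequently $\lambda$ increases throughout, so $\lambda(\mu_L) > \lambda(\mu)$ for all $\mu_L > \mu$, and no bias is optimal. This upgrade from a local to a global statement also dispatches a potential gap: since $\lambda(\mu_L) \ge \lambda(\mu) > 0$ on the relevant range (using $\mu > a-d$), the perceived problem keeps an interior binary optimum (with Condition \ref{conditioninterior} keeping the lower posterior non-negative), so there is no discontinuous drop to $\lambda = 0$ that could overturn the comparison.

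I expect the main obstacle to be purely algebraic: obtaining the two simplifications of $1-a_L$ and $\mu_L + d - a_L$ that produce the tractable form of $\lambda$, and then verifying the two factorizations $N'M - 2NM' = (a-\mu)\Phi$ and $\Phi(\mu) = \mu(1-\mu)(2a-d-1)$. The only genuinely non-mechanical step is the logical one at the end of the third paragraph, where $\Phi(\mu) \ge 0$ together with $\beta < 0$ pins down $\alpha > 0$ and thereby turns a one-point sign computation into the ``any amount of bias'' conclusion.
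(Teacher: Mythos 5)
Your proof is correct and takes essentially the same route as the paper's: your affine function $\Phi$ is, up to the positive factor $\left(a-\mu\right)/M^{3}$, exactly the paper's $f\left(\mu_{L}\right) = \left(\left(d-2a+1\right)\mu-ad+a\right)\mu_{L}+\left(a-1\right)\left(d+1\right)\mu$, and both arguments turn on the factorization $\Phi\left(\mu\right) = \mu\left(1-\mu\right)\left(2a-d-1\right)$ followed by the same case split on $2a - d \lessgtr 1$. The only differences are cosmetic refinements: you deduce the slope $\alpha > 0$ in the second case from $\Phi\left(\mu\right) \geq 0$ and $\beta < 0$ (the paper instead shows $f$ is globally increasing using $d < a$), and you make explicit the corner-solution check that the paper leaves implicit.
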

In fact, an even stronger result holds: if $a$ is sufficiently low and $d$ is sufficiently high (and $\mu > a - d$), then for all $\mu_L$ sufficiently close to $\mu$, $\lambda$, the probability that an innocent $D$ is convicted is (strictly) decreasing in $\mu_L$. Not only is bias better for $D$ than no bias, but $D$'s welfare is actually increasing in the amount of bias. Outside of this interval, $D$'s welfare is decreasing in the amount of bias.

The last statement of Theorem \ref{mainthm1} is trivial: if the parameters are such that $L$ does not investigate $D$, $D$ can only be (weakly) hurt by any bias. In addition, the statement of this theorem is phrased in terms of the outcome for the innocent $D$, but recall that bias has the same affect on outcome for the guilty $D$. Accordingly, $D$ prefers $L$ have a mild bias provided the bar for conviction is not too high and $L$'s reward for a conviction is not too low.

Figure \ref{fig4} illustrates two specifications of the parameters. The first, Case 1, is an instance in which bias benefits $D$. The bottom function is $L$'s perceived value function, given her incorrect prior. She acquires evidence until her posterior is $a_L$ or until her posterior is $b_L$. Importantly, $x_{DM}\left(b_L\right)$ is the correct belief that $L$ should have after acquiring the evidence that yields her belief $b_L$. It is always less than $b_L$, but (in general) it may or may not be less than $b$, the low posterior in the counterfactual problem in which $L$ is not biased. It is precisely when $x_{DM}\left(b_L\right) > b$ that bias strictly benefits $D$: $L$ mistakenly stops her information acquisition too early, earlier than she would have, had she the correct prior. Accordingly, Case 2, is an instance in which bias strictly hurts $D$. There, $x_{DM}\left(b_L\right) < b$: $L$ over-investigates, making it more likely that she encounters the evidence she needs to convict $D$, to $D$'s detriment.

\begin{figure}[tbp]
\centering
\begin{subfigure}{.5\textwidth}
  \centering
  \includegraphics[scale=.15]{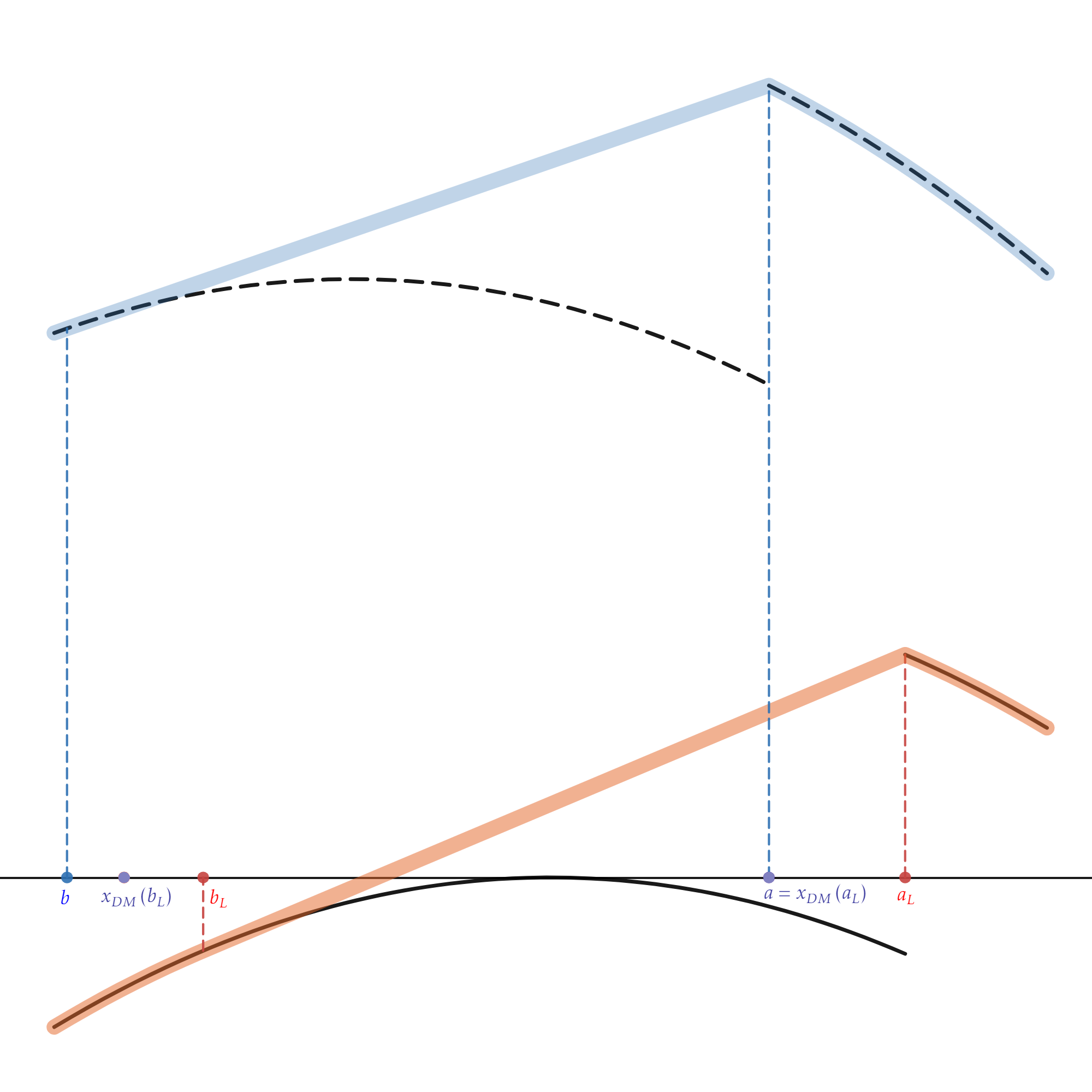}
  \caption{\textbf{Case 1:} Low $a$.}
  \label{figsub12}
\end{subfigure}%
\begin{subfigure}{.5\textwidth}
  \centering
  \includegraphics[scale=.15]{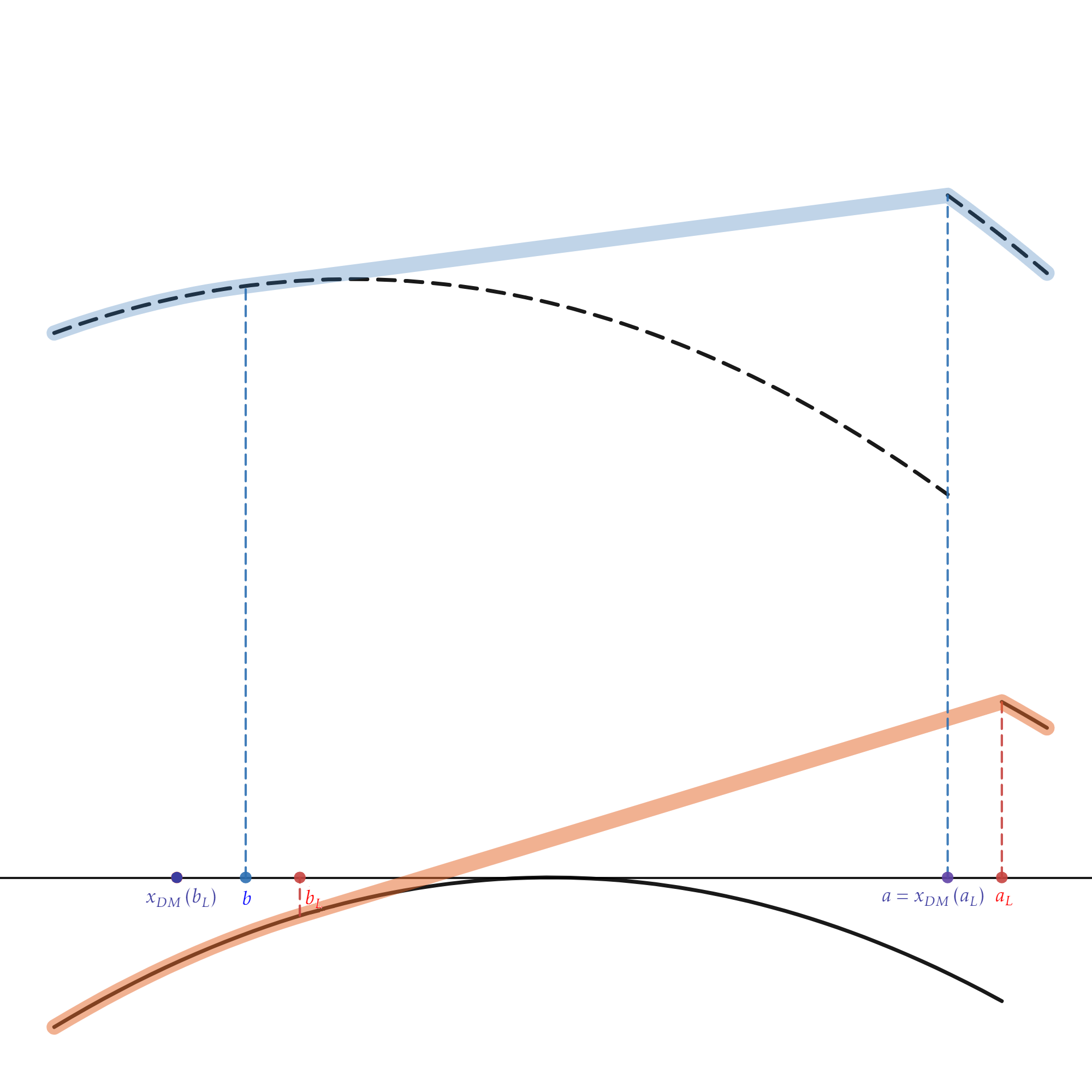}
  \caption{\textbf{Case 2:} High $a$.}
  \label{figsub22}
\end{subfigure}
\caption{$L$'s perceived value function, $V\left(x\right)$, (solid black); its concavification (red); $L$'s actual value function (dashed black); and its concavification (blue).}
\label{fig4}
\end{figure}

\subsection{When the Decision-Maker is Biased}

We now stipulate that it is the decision-maker who is biased against the defendant and has a prior belief $\mu_{DM} > \mu = \mu_{L}$ that $D$ is guilty. If $L$ induces $x$, the corresponding belief held by the $DM$ is
\[x_{DM}\left(x\right) = \frac{\mu_{DM} \left(1-\mu\right)}{\mu\left(1-\mu_{DM}\right)-x\left(\mu-\mu_{DM}\right)}x \text{ .}\]
Thus, $L$ secures a conviction if and only if it induces a belief
\[x \geq a_{DM} \coloneqq \frac{\left(1-\mu_{DM}\right)\mu}{\left(\mu - \mu_{DM}\right)a + \mu_{DM}\left(1-\mu\right)}a \text{ .}\]

Using Bayes' law, we can write out the conditional probabilities
$\hat{\gamma} \coloneqq \mathbb{P}\left(g | 1\right)$ and $\hat{\lambda} \coloneqq \mathbb{P}\left(g | -1\right)$:
\[\hat{\gamma} = \frac{a_{DM}}{\mu} p_{DM}, \quad \text{and} \quad \hat{\lambda} = \frac{1-a_{DM}}{1-\mu} p_{DM} \text{ ,}\]
where $p_{DM}$ is pinned down by the first order condition
\[\label{eq2n2}\tag{$2$}0 = \phi\left(b_{DM}\right) + \left(a_{DM}-b_{DM}\right)\phi'\left(b_{DM}\right) + v-\phi\left(a_{DM}\right) \eqqcolon f\left(b_{DM},a_{DM}\right) \text{ ,}\]
and $p_{DM} = \left(\mu-b_{DM}\right)/\left(a_{DM}-b_{DM}\right)$.\footnote{Keep in mind that $b_{DM}$ is $L$'s (correct) low belief in her optimal learning, where the subscript ``$DM$'' indicates that this is the case with the biased $DM$. Likewise, $p_{DM}$ is the (correct) probability of conviction.}

The conviction threshold $a_{DM}$ faced by $L$ is strictly decreasing in the $DM$'s prior, $\mu_{DM}$. If this threshold, $a_{DM}$, is weakly less than $\mu$, (equivalently, if $a \leq \mu_{DM}$) law enforcement does not obtain any information and secures a conviction with probability $1$. Moreover, as we establish in proof of Proposition \ref{biasinc}, the lower bound for $L$'s learning, $b_{DM}$, is also strictly decreasing in $\mu_{DM}$. Consequently, any increase in the $DM$'s bias that either renders information acquisition optimal or sub-optimal can only hurt the innocent $D$. As it turns out, we are able to prove a general result concerning the effect of prejudice on an innocent $D$'s welfare.
\begin{proposition}\label{biasinc}
The probability that an innocent $D$ is convicted, $\hat{\lambda}$, is increasing in the $DM$'s bias, strictly so if $L$ would acquire information in the problem without bias.
\end{proposition}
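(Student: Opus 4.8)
The plan is to express every relevant quantity as a function of the single variable $a_{DM}$ and to exploit the fact, already noted in the text, that $a_{DM}$ is strictly decreasing in $\mu_{DM}$. It therefore suffices to show that $\hat\lambda$ is strictly \emph{decreasing} in $a_{DM}$ on the region where $L$ acquires information. Writing $A \coloneqq a_{DM}$ and $B \coloneqq b_{DM}$, recall that $\hat\lambda = \tfrac{1-A}{1-\mu}\,p_{DM}$ with $p_{DM} = \tfrac{\mu - B}{A-B}$, where $B = B(A)$ is the low posterior pinned down by the first-order condition \eqref{eq2n2}, $f(B,A)=0$. Since the constant factor $\tfrac{1}{1-\mu}$ is irrelevant and $1-A$ is manifestly positive and decreasing in $A$, the whole statement reduces to showing that $p_{DM}$ is positive and strictly decreasing in $A$: the product of two positive decreasing factors is then decreasing, giving $\tfrac{d\hat\lambda}{d\mu_{DM}}>0$.

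First I would pin down the comparative static of $B$. Differentiating the identity $f(B,A)=\phi(B)+(A-B)\phi'(B)+v-\phi(A)=0$ implicitly gives $B'(A) = -f_A/f_B$. A direct computation yields $f_B = (A-B)\phi''(B)$ and $f_A = \phi'(B)-\phi'(A)$. Because $\phi$ is strictly convex and $B < \mu < A$ throughout the interior regime, we have $f_B>0$ and $f_A<0$, so $B'(A)>0$. Combined with $\tfrac{dA}{d\mu_{DM}}<0$, this already establishes the auxiliary claim the text invokes, namely that $b_{DM}$ is strictly decreasing in $\mu_{DM}$.

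Next I would differentiate $p_{DM}=\tfrac{\mu-B}{A-B}$ with respect to $A$. The quotient rule produces a derivative whose sign equals that of the numerator $B'(A)(\mu-A)-(\mu-B)$. In the interior regime $\mu<A$ and $B<\mu$, while $B'(A)>0$ from the previous step, so both summands are strictly negative; hence $p_{DM}$ is strictly decreasing in $A$. Feeding this back, $\hat\lambda=\tfrac{1}{1-\mu}(1-A)\,p_{DM}$ is a product of positive factors that are each decreasing in $A$, so $\hat\lambda$ is strictly decreasing in $A$ and therefore strictly increasing in $\mu_{DM}$. Finally, for the weak statement when $L$ would not investigate absent bias, I would verify the two regime boundaries by continuity: as $A\downarrow\mu$ one has $p_{DM}\to 1$ and $\hat\lambda\to 1$, matching the no-learning/sure-conviction region $a\le\mu_{DM}$, while at the threshold where information acquisition first becomes optimal $p_{DM}=0$ so $\hat\lambda=0$. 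Thus $\hat\lambda$ rises weakly from $0$ to $1$ as $\mu_{DM}$ grows, with strict monotonicity exactly on the interior learning region.

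The main obstacle is the sign of $B'(A)$: the entire argument hinges on the low belief moving monotonically with the threshold, and this is precisely where strict convexity of $\phi$ together with the ordering $B<\mu<A$ must be invoked carefully. Once $B'(A)>0$ is secured, the remaining quotient- and product-rule sign chase is routine, as every factor's sign is fixed by the elementary inequalities $B<\mu<A<1$.
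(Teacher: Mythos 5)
Your proposal is correct and takes essentially the same route as the paper's proof: you sign $b_{DM}'(a_{DM})>0$ by the implicit function theorem applied to $f(b_{DM},a_{DM})=0$ (using $f_B=(A-B)\phi''(B)>0$ and $f_A=\phi'(B)-\phi'(A)<0$), then sign $p_{DM}'(a_{DM})$ via the identical numerator $B'(A)(\mu-A)-(\mu-B)<0$, and combine with $a_{DM}'(\mu_{DM})<0$, exactly as in the appendix. Your closing continuity check at the regime boundaries is a small, welcome elaboration of the weak-monotonicity case that the paper disposes of in the text immediately preceding the proposition.
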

This result is intuitive, once we make the observation that an increase in the $DM$'s bias is identical from $L$'s (and $D$'s) perspective to a decrease in the threshold needed to secure conviction. A lower conviction threshold obviously hurts $D$ overall, and this proposition establishes that it hurts an innocent $D$ to boot.

\subsection{Whose Bias?}

So far, we have shown that both an innocent and a guilty $D$ may benefit from a biased $L$, but that an innocent $D$ is always worse off when the $DM$ is biased. Suppose we fix the level of bias. Does an innocent $D$ always prefer that $L$ is biased rather than the $DM$? For convenience, we set $\mu_B \geq \mu$ to be the biased belief--we shall now compare the case in which $\mu_B = \mu_{DM}$ with the case in which $\mu_B = \mu_L$.

It is easy to come up with conditions that ensure an innocent $D$ would rather $L$ have the bias. For instance, if the conditions in Theorem \ref{mainthm1} hold--that guarantee that $L$'s bias benefits $D$--then a biased $L$ is better for $D$ than a biased $DM$ (since we show in Proposition \ref{biasinc} that an innocent $D$'s welfare is decreasing in the $DM$'s bias). Furthermore, if the bias is sufficiently large ($\mu_{B} \geq a$), then a biased $DM$ requires no convincing and will convict based on the prior alone. Given this extreme prejudice, $L$ will not acquire any information, and an innocent $D$ will be convicted with probability $1$. Conversely, a biased $L$ never secures a conviction with probability $1$, no matter her bias. These observations, plus the the continuity of  $\hat{\lambda}$ and $\lambda$ in $\mu_B$, allow us to conclude the following remark.
\begin{remark}
For any biased prior, $\mu_B$, that is sufficiently large, an innocent $D$'s welfare when $L$ is biased is strictly higher than an innocent $D$'s welfare when the $DM$ is biased.
\end{remark}
Obviously, for all $\mu_B$ sufficiently large, a guilty $D$ is also better off when $L$, not the $DM$, is prejudiced. 

These observations notwithstanding, an innocent $D$ may still prefer that the biased party be the $DM$ and not $L$. We begin with this result for the variance cost functional, $C\left(F\right) = \int \phi_{Var}dF$.

\begin{proposition}\label{lastproposition}
Assume the variance cost: $\phi = \phi_{Var}$. Let the correct prior, reward ratio, and conviction threshold ($\mu$, $d$, and $a$, respectively) be such that if she were not biased $L$ would acquire information ($\mu > a - d$). If $a$ is sufficiently high and if $d$ is sufficiently small, there exists a threshold (false) prior $\mu_{B}^{\diamond}$, such that an innocent $D$ prefers the $DM$ to be biased to $L$ if and only if $\mu_{B} \in \left(\mu, \mu_{B}^{\diamond}\right)$. Otherwise, an innocent $D$ prefers $L$ to be biased for all $\mu_{B} \geq \mu$.
\end{proposition}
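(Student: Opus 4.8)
The plan is to reduce the welfare comparison to the sign of a single difference and then control that difference globally by a concavity argument. Under the variance cost, the analysis behind Theorem~\ref{mainthm1} gives, when $L$ is the biased party ($\mu_B=\mu_L$), the true conviction probability of an innocent defendant
\[\lambda(\mu_B)=\frac{1-a_L}{1-\mu_B}\,\frac{\mu_B+d-a_L}{d},\]
while the construction preceding Proposition~\ref{biasinc} gives, when the $DM$ is biased ($\mu_B=\mu_{DM}$, $L$ unbiased with prior $\mu$),
\[\hat\lambda(\mu_B)=\frac{1-a_{DM}}{1-\mu}\,\frac{\mu+d-a_{DM}}{d},\]
where $a_L=a_L(\mu_B)$ and $a_{DM}=a_{DM}(\mu_B)$ are the thresholds displayed above (recall that with the variance cost the low belief sits a distance $d$ below the target in both scenarios). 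Since the defendant's welfare is decreasing in the probability he is convicted, an innocent $D$ prefers the $DM$ to be biased exactly when $\Delta(\mu_B)\coloneqq\lambda(\mu_B)-\hat\lambda(\mu_B)>0$. The whole proposition is therefore a statement about the sign of $\Delta$ on $\mu_B\geq\mu$, and the natural anchor is that at $\mu_B=\mu$ both scenarios collapse to the no-bias benchmark ($a_L=a_{DM}=a$), so $\Delta(\mu)=0$.

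First I would pin down the local behavior of $\Delta$ just above $\mu$. Differentiating the two threshold formulas at $\mu_B=\mu$ yields the clean identity $\frac{d a_L}{d\mu_B}\big|_{\mu}=-\frac{d a_{DM}}{d\mu_B}\big|_{\mu}=\frac{a(1-a)}{\mu(1-\mu)}\eqqcolon k$: a marginal bias raises $L$'s effective threshold by exactly as much as it lowers the $DM$-induced one. Substituting into $\lambda'$ and $\hat\lambda'$ and simplifying, I expect to obtain
\[\Delta'(\mu)=\frac{1-a}{d(1-\mu)^2}\left[(1+d-a)-\frac{2a}{\mu}\bigl(1+d+\mu-2a\bigr)\right].\]
The bracketed term is the switch: it turns positive precisely when $a$ is large and $d$ is small, so that $1+d+\mu-2a<0$ and the subtracted term flips sign—this is the same ``$2a-d$ large'' regime in which Theorem~\ref{mainthm1} shows $L$'s bias hurts $D$. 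Thus ``$a$ high, $d$ small'' gives $\Delta'(\mu)>0$ (the $DM$-biased scenario is locally better for an innocent $D$), while otherwise $\Delta'(\mu)\leq 0$ (the biased-$L$ scenario is locally better).

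Next I would establish the other end of the range. As $\mu_B\uparrow a$ one computes $a_{DM}\to\mu$, so the $DM$ convicts on the prior, $L$ acquires nothing, and $\hat\lambda\to 1$; since a biased $L$ never convicts an innocent defendant with probability one, $\lambda$ stays bounded below $1$, whence $\Delta<0$ near $\mu_B=a$ (this is exactly the content of the Remark preceding the proposition). Should Scenario A reach a corner, with $L$ ceasing to investigate, before $\mu_B=a$, then $\lambda=0<\hat\lambda$ there, which only reinforces $\Delta<0$. With $\Delta(\mu)=0$, the sign of $\Delta'(\mu)$, and $\Delta<0$ at the top of the range in hand, the proposition follows from one structural fact: that $\Delta$ is strictly concave on the interval of priors for which both parties actively acquire information. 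Granting concavity, in the ``$a$ high, $d$ small'' case $\Delta$ rises off zero and, being concave with a negative value at the top, crosses zero exactly once, at the claimed $\mu_B^{\diamond}$, with $\Delta>0$ on $(\mu,\mu_B^{\diamond})$; in the complementary case $\Delta'(\mu)\leq0$ together with concavity forces $\Delta<0$ for every $\mu_B>\mu$, so an innocent $D$ prefers a biased $L$ throughout.

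The main obstacle is precisely this concavity claim. I would attack it by clearing denominators: writing $\Delta(\mu_B)=P(\mu_B)/\bigl(d\,D_A(\mu_B)^2D_B(\mu_B)^2\bigr)$, with $D_A,D_B$ the (positive) threshold denominators, $P$ is a quartic with $P(\mu)=0$, so $P(\mu_B)=(\mu_B-\mu)\tilde P(\mu_B)$ for a cubic $\tilde P$; the dichotomy then amounts to showing that $\tilde P$ has exactly one root in the relevant interval, which I would obtain by verifying that $\tilde P$ is strictly decreasing there (equivalently $\Delta''<0$) under the stated restrictions on $a$ and $d$. This is the one genuinely computational step, and it is where I expect the constraint $\mu>a-d$ and Condition~\ref{conditioninterior} to do real work—keeping the denominators signed and the cubic monotone on $(\mu,a)$. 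If global concavity should fail at the extreme corner, a Sturm/Descartes sign count on $\tilde P$ restricted to $(\mu,a)$ is the fallback that still delivers the single interior root $\mu_B^{\diamond}$.
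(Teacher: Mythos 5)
Your setup is faithful to the paper's: the reduction to the sign of $\Delta(\mu_B)=\lambda(\mu_B)-\hat\lambda(\mu_B)$, the explicit formulas for $\lambda$ and $\hat\lambda$ under the variance cost, the anchor $\Delta(\mu)=0$, the symmetric threshold derivatives $a_L'(\mu)=-a_{DM}'(\mu)=a(1-a)/(\mu(1-\mu))$, and the endpoint observation $\Delta<0$ near $\mu_B=a$ are all correct, and your bracket in $\Delta'(\mu)$ is exactly right: multiplying it by $\mu$ gives the paper's $h(d)=4a^2-3a\mu-2a+\mu+(\mu-2a)d$. Two issues, one minor and one fatal. Minor: your ``turns positive precisely when $1+d+\mu-2a<0$'' misplaces the regime boundary; the bracket is positive iff $d<\bar d\coloneqq(4a^2-3a\mu-2a+\mu)/(2a-\mu)$, and since $h(2a-1-\mu)=\mu(a-\mu)>0$ your condition is strictly stronger than the true one, so the ``iff'' labeling of the parameter regions is off (nonemptiness of $(a-\mu,\bar d)$ is what produces the paper's threshold $\bar a=(1+\sqrt{1-2\mu+2\mu^2})/2$).

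Fatal: the entire global argument rests on strict concavity of $\Delta$ on $(\mu,a)$, which you never establish, which is stronger than what is needed, and which is not obviously true for this rational function; moreover your proposed verification is internally flawed, since ``$\tilde P$ strictly decreasing'' is \emph{not} equivalent to $\Delta''<0$ (monotonicity of the cubic factor of the numerator neither implies nor is implied by concavity of the quotient). The paper proves single crossing with no concavity at all: clearing denominators, $\lambda-\hat\lambda$ has the sign of an explicit quartic $f(\mu_B)$ with $f(0)=a^2(1-a)d\mu^2\geq 0$, $f(\mu)=0$, $f'(\mu_B)\vert_{\mu_B=\mu}>0$ (when $d<\bar d$), and $f(a)<0$; if $f$ had three roots in $(\mu,a)$ it would need at least five real roots, impossible for a quartic---which is precisely your ``Sturm/Descartes fallback,'' and it must be the main argument, not a fallback. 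Your treatment of the ``otherwise'' branch inherits the same gap: $\Delta'(\mu)\leq 0$ alone gives nothing global without concavity, and at the boundary $d=\bar d$ it would yield only weak negativity. The paper instead shows $\partial f/\partial d\leq 0$ via an explicit factorization and evaluates at $d=\bar d$, where $f(\mu_B,\bar d)$ has the sign of an expression that vanishes at $\mu_B=\mu$ and is strictly decreasing in $\mu_B$, delivering $f<0$ on all of $(\mu,a)$. As written, your proposal correctly reproduces the paper's local analysis but leaves the proposition's global dichotomy unproven.
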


It is illuminating to contrast this result to Theorem \ref{mainthm1}, in which we establish that a large reward (high $d$) and a low conviction threshold ($a$) are required in order for bias to benefit an innocent $D$. Conversely, Proposition \ref{lastproposition} states that a \textit{low} reward and a \textit{high} conviction threshold are essential parts of an environment in which $D$ prefers the $DM$'s prejudice to $L$'s. Regardless of who is biased, $D$ is hurt by bias if and only if the bias causes $L$ to over-investigate (relative to the situation when there is no bias).\footnote{As an aside, one might think that an innocent defendant should prefer more investigation to less, so that the truth is more likely to be discovered. However, this is only true if the current body of evidence is such that the defendant would be convicted. An adversarial investigator would never do such a thing, rendering an increase in scrutiny always bad in our scenario.} When the $DM$ is prejudiced, this excess of information acquisition is guaranteed and is due to the effective decrease in the conviction threshold. The cause of $L$'s over-investigation due to her own bias is more subtle: it occurs when the (false) conviction probability $L$ is increasing sufficiently quickly in $\mu_B$. An even greater rate of increase, thus, is needed for $L$'s partisanship to be worse for $D$ than the $DM$'s: it is then that the over-investigation $L$ is ``tricked'' into by her false prior exceeds the over-investigation engendered by the conviction threshold's decrease.

Can the findings of Proposition \ref{lastproposition} be generalized? Note that the conditions in this proposition ensure that the bias is not too large and that the likelihood of conviction is small; \textit{viz.},, both the perceived likelihood, $p_L$, from the scenario in which $L$ is biased; and the true likelihood, $p_{DM}$, from the scenario in which the $DM$ is biased--which are close to one another when the bias is small--are small. Accordingly, the effect of a small amount of bias is approximated well by rate of change of the likelihood of conviction (again, $p_{DM}$ versus $p_L$) in the size of the bias. %We find that for a variety of cost functions, $p_L$ increases faster than $p$ for low levels of bias when 

\begin{theorem}\label{bigtheorem}
Suppose one of the following three conditions holds:
\begin{enumerate}[label=\roman*]
    \item The cost is the log-likelihood cost, $\phi = \phi_{LL}$, and the correct prior is low, $\mu < 1/2$;
    \item The cost is the Tsallis cost, $\phi = \phi_{T}$ (which includes $\phi = \phi_{Entropy}$); or
    \item The cost, $\phi$, has bounded slope on $[0,1]$.
\end{enumerate}
Let the conviction threshold $a = 1 - \epsilon$, $\epsilon > 0$, and define $v_\epsilon$ to be the lowest reward ratio, $v$, such that there is an interior solution when there is zero bias, $\mu_B = \mu$. Then, for all $\epsilon$ sufficiently small, there exists an interval $\left(v_\epsilon,\bar{v}\right)$ such that an innocent $D$'s welfare is strictly higher when the $DM$ is biased instead of $L$ for all $\mu_B$ sufficiently close to $\mu$.
\end{theorem}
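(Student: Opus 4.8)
The plan is to reduce the welfare comparison to a single first-order (in the bias) inequality. An innocent defendant's welfare is strictly decreasing in his conviction probability, which is $\lambda$ when $L$ is biased and $\hat\lambda$ when the $DM$ is biased; writing both as functions of the common bias $\mu_B$, we have $\lambda(\mu)=\hat\lambda(\mu)$ (the no-bias problems coincide), so the defendant strictly prefers a biased $DM$ for all $\mu_B$ slightly above $\mu$ precisely when $\frac{d\lambda}{d\mu_B}>\frac{d\hat\lambda}{d\mu_B}$ at $\mu_B=\mu$. Recalling $\lambda=\frac{1-a_L}{1-\mu_B}\,p_L$ and $\hat\lambda=\frac{1-a_{DM}}{1-\mu}\,p_{DM}$, everything reduces to differentiating these two expressions and comparing.

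Two structural facts do the heavy lifting. First, for each of the three cost classes the prior enters $\phi$ only as an additive constant, so it cancels from the stopping first-order condition $\phi(b)+(a_\bullet-b)\phi'(b)+v-\phi(a_\bullet)=0$; hence the low belief is a function $\beta(a_\bullet)$ of the effective threshold \emph{alone}, the \emph{same} map in both scenarios, with $p_L=\frac{\mu_B-\beta(a_L)}{a_L-\beta(a_L)}$ and $p_{DM}=\frac{\mu-\beta(a_{DM})}{a_{DM}-\beta(a_{DM})}$. Second, differentiating the closed forms for $a_L$ and $a_{DM}$ shows that at $\mu_B=\mu$ their first-order responses are exactly antisymmetric:
\[
\frac{da_L}{d\mu_B}\bigg|_{\mu}=-\frac{da_{DM}}{d\mu_B}\bigg|_{\mu}=A:=\frac{a(1-a)}{\mu(1-\mu)}.
\]
This antisymmetry is the reason the two scenarios can be compared so cleanly: raising either bias moves $L$'s target threshold by the same amount but in opposite directions.

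Next I would apply the chain rule. Using the two facts, at $\mu_B=\mu$ (where $a_L=a_{DM}=a$, $b_L=b_{DM}=\beta(a)$, $p_L=p_{DM}=p^\ast$) the derivative difference collapses to an expression in $A$, $p^\ast$, $\beta'(a)$, and the spread $a-\beta(a)$. I then send $v\downarrow v_\epsilon$, where $v_\epsilon=\phi(a)-\phi(\mu)-(a-\mu)\phi'(\mu)$ is exactly the reward at which $\beta(a)=\mu$; along this limit $p^\ast\to0$ and $\beta(a)\to\mu$, and the difference converges to
\[
\frac{(1-a)\bigl(1-2A\,\beta'(a)\bigr)}{(1-\mu)(a-\mu)},\qquad \beta'(a)=\frac{\phi'(a)-\phi'(\mu)}{(a-\mu)\,\phi''(\mu)} .
\]
The prefactor is positive, so the sign is that of $1-2A\beta'(a)$, and by continuity in $v$ a strict sign at the endpoint extends to an interval $(v_\epsilon,\bar v)$; a first-order Taylor expansion in $\mu_B$ then delivers the conclusion for $\mu_B$ near $\mu$.

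It remains to sign $1-2A\beta'(a)$ with $a=1-\epsilon$ as $\epsilon\to0$, and this is where the three cost conditions enter and where I expect the only real work. Substituting $A$ and $\beta'(a)$, the quantity $2A\beta'(a)$ converges to $2\ell_\phi/\bigl(\mu(1-\mu)^2\phi''(\mu)\bigr)$ with $\ell_\phi:=\lim_{\epsilon\to0}\epsilon\,\phi'(1-\epsilon)$. The hard part is thus purely the boundary asymptotics of $\phi'$ near $1$: for bounded-slope costs $\ell_\phi=0$ trivially, and a short computation gives $\ell_\phi=0$ for every Tsallis cost (including entropy, where $\epsilon\log(1/\epsilon)\to0$), so $2A\beta'(a)\to0<1$ and both cases go through for small $\epsilon$. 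The delicate case is the log-likelihood cost, for which $\phi'_{LL}(1-\epsilon)\sim 1/\epsilon$ gives $\ell_\phi=1$; combined with the clean identity $\phi''_{LL}(\mu)=[\mu(1-\mu)]^{-2}$ this yields $2A\beta'(a)\to 2\mu$, so $1-2A\beta'(a)>0$ iff $\mu<1/2$ --- exactly hypothesis (i). Pinning down these limits rigorously (rather than heuristically) and justifying the one-sided limit in $v$ are the steps requiring the most care.
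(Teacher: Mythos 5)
Your proposal is correct and follows essentially the same route as the paper's own proof: you compare $\lambda'(\mu_B)$ and $\hat{\lambda}'(\mu_B)$ at $\mu_B=\mu$, exploit uniform posterior separability (so that $b=\beta(a_\bullet)$ is the same map in both scenarios) together with the antisymmetric threshold responses $a_L'=-a_{DM}'=A$ at $\mu_B=\mu$, take $v\downarrow v_\epsilon$ so that $b\to\mu$ and the $p$-proportional term vanishes, and reduce the sign to $1-2A\beta'(a)$, which is exactly the paper's Expression (A7). Your boundary asymptotics via $\ell_\phi=\lim_{\epsilon\to 0}\epsilon\,\phi'(1-\epsilon)$ reproduce the paper's case-by-case limits ($q_{LL}\to 2\mu$, giving the $\mu<1/2$ condition; $q_T\to 0$; $q\to 0$ for bounded slope) in a slightly cleaner packaging that, incidentally, avoids the paper's unnecessary aside that a bounded $\phi'$ implies a bounded $\phi''$.
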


In short, we find if the conviction threshold is sufficiently high, and $L$ is very close to moving on from the suspect (abandoning her investigation of $D$), \textit{and} bias is not too large, the $DM$'s bias is better for an innocent defendant than $L$'s. The over-investigation that arises due to $L$'s bias is worse for the innocent defendant than the lower conviction threshold induced by the $DM$'s bias.

\section{What Kind of Bias?}\label{whatkindofbias}

So far, we have modeled bias as an incorrect prior. That is, we characterize bias in terms of agents' beliefs. However, bias may also manifest through agents' preferences. We can model this preference-based bias as follows: suppose now that law enforcement's payoff from conviction
is of the form \[\eta u\left(\alpha,\theta\right) + \left(1-\eta\right) v \mathbbm{1}_{\alpha = convict} \text{ ,}\]
where $\alpha \in \left\{acquit, convict\right\}$ is the decision-maker's action and where
\[u\left(\alpha,\theta\right) = \begin{cases}
0, \quad &\text{if} \ \left(\alpha,\theta\right) = \left(convict, 1\right) \ \text{or} \ \left(\alpha,\theta\right) = \left(acquit, -1\right)\\
-1, \quad &\text{if} \ \left(\alpha,\theta\right) = \left(acquit, 1\right)\\
-\rho, \quad &\text{if} \ \left(\alpha,\theta\right) = \left(convict, -1\right)
\end{cases} \text{ ,}\]
where $\rho \geq 0$. We impose that the $DM$'s conviction threshold $a \geq \frac{\rho\eta-\left(1-\eta\right)v}{\eta\left(1+\rho\right)}$, so that the $DM$ has a higher conviction threshold than $L$. When $\eta = 0$, we are in the scenario of the earlier portion of the paper: in terms of preferences, $L$ is maximally biased in our main specification. In addition, both $DM$ and $L$ share the same (correct) prior $\mu$ that $D$ is guilty.

$L$ chooses any Bayes-plausible distribution to maximize
\[V\left(x\right) \coloneqq \begin{cases}
-\eta x -\phi\left(x\right), \quad & 0 \leq x \leq a\\
-\eta \rho \left(1-x\right) + \left(1-\eta\right)v-\phi\left(x\right), \quad & a \leq x \leq 1\\
\end{cases} \text{ .}\]%Standard results (see, e.g., \cite{kam}) imply that $L$'s optimal distribution over posteriors has binary support, and i
It is easy to see that there are two varieties of interior solution: either $L$ obtains beliefs $a$ and $b \ (< \mu)$ (the $DM$'s preferences bind) or $L$ obtains beliefs $x_H > a$ and $b \ (< \mu)$ (the $DM$'s preferences do not bind). If the $DM$'s preferences bind, $L$'s behavior (at an interior solution) is pinned down by \[\tag{$\triangle$}\label{triangle}0 = \phi\left(b\right) + \eta a + \left(a-b\right)\phi'\left(b\right) - \eta \rho \left(1-a\right)  + \left(1-\eta\right)v-\phi\left(a\right) \eqqcolon s\left(b,a\right) \text{ .}\]
Henceforth, we make the following assumption, which is necessary and sufficient for $L$'s solution to be interior, in which the $DM$'s preferences bind.
\begin{assumption}\label{ass2}
$b > 0$, $\mu \in \left(b,a\right)$, and $-\eta - \phi'\left(b\right) \geq \eta \rho - \phi'\left(a\right) $, where $b$ is the unique root of $s$ defined in Expression \ref{triangle}.
\end{assumption}

The first result of this section states that the $DM$'s bias has the same effect on an innocent $D$'s welfare as in the main specification. Importantly, preference-based and belief-based bias for the $DM$ have the same effect on outcomes: both lower the $DM$'s conviction threshold.
\begin{proposition}\label{firstpreference}
The probability that an innocent $D$ is convicted is increasing in the $DM$'s bias.
\end{proposition}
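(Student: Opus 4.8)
The plan is to reduce the DM's bias to a downward shift in the single parameter $a$ and then to establish that $\hat{\lambda}$ is monotone in $a$. First I would invoke the observation already made for the belief-based case—where $a_{DM}$ is strictly decreasing in $\mu_{DM}$—together with the remark that preference-based bias for the DM also merely lowers his conviction threshold. Hence ``more DM bias'' is exactly ``smaller $a$,'' and it suffices to show that $\hat{\lambda}$ is decreasing in $a$ (weakly, and strictly when $L$ investigates). Under Assumption \ref{ass2} the optimum is interior with the DM's preferences binding, so $L$ induces the two posteriors $b$ and $a$ with conviction probability $p = (\mu - b)/(a-b)$, and Bayes' law gives, exactly as in the biased-DM computation, $\hat{\lambda} = \frac{1-a}{1-\mu}\,\frac{\mu - b}{a - b}$, where $b = b(a)$ is the root of $s(\cdot,a)$ from Expression \ref{triangle}.

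The key intermediate step is to sign $db/da$, which I would obtain by implicitly differentiating the first-order condition $s(b,a)=0$. A direct computation yields $\partial s/\partial b = (a-b)\phi''(b) > 0$ by strict convexity, and $\partial s/\partial a = \eta(1+\rho) - (\phi'(a) - \phi'(b))$. Assumption \ref{ass2}, which rearranges to $\phi'(a) - \phi'(b) \geq \eta(1+\rho)$, is precisely what forces $\partial s/\partial a \leq 0$, so that $db/da = -(\partial s/\partial a)/(\partial s/\partial b) \geq 0$: the lower belief moves up as the threshold rises.

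Finally I would take the total derivative of $\hat{\lambda}$ in $a$. Writing $\hat{\lambda} = g(a,b)/(1-\mu)$ with $g(a,b) = (1-a)(\mu-b)/(a-b)$, one computes $\partial g/\partial a = -(\mu-b)(1-b)/(a-b)^2 < 0$ and $\partial g/\partial b = (1-a)(\mu-a)/(a-b)^2 < 0$, using $0 < b < \mu < a < 1$. Since both partials are negative and $db/da \geq 0$, the total derivative $\partial g/\partial a + (\partial g/\partial b)(db/da)$ is strictly negative whenever $\mu > b$, i.e. whenever $L$ acquires information ($p>0$), and vanishes only in the no-investigation corner. Translating back to the bias parameter, $\hat{\lambda}$ is increasing in the DM's bias, strictly so when $L$ would investigate.

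I expect the main obstacle to be the sign of $db/da$. The two partials of $g$ are elementary, but without controlling how the endogenous low belief $b$ responds to the threshold the total derivative could in principle change sign; it is exactly here that Assumption \ref{ass2} earns its keep, by pinning $\partial s/\partial a \leq 0$ so that $b(a)$ is nondecreasing. Once that is in hand, the remaining arithmetic is routine.
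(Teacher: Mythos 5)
Your proposal is correct and follows essentially the same route as the paper: both identify an increase in the $DM$'s bias with a decrease in the threshold $a$, implicitly differentiate the first-order condition $s\left(b,a\right)=0$ using $\partial s/\partial b = \left(a-b\right)\phi''\left(b\right) > 0$ and $\partial s/\partial a = \eta\left(1+\rho\right) - \left(\phi'\left(a\right)-\phi'\left(b\right)\right) \leq 0$ (the rearrangement of Assumption \ref{ass2}), conclude $b'\left(a\right) \geq 0$, and then sign $\hat{\lambda}'\left(a\right) < 0$. Your final step computes the two partials of $\left(1-a\right)\left(\mu-b\right)/\left(a-b\right)$ directly rather than routing through $p'\left(a\right)$ as the paper does, but this is the same argument in trivially different packaging.
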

There are a number of ways of modeling $L$'s preference-based bias. An increase in this bias could correspond to an increase in $v$, or a decrease in $\eta$ or $\rho$. Given this multiplicity, we establish the following result concerning the effect of $L$'s preference-based bias.
\begin{proposition}\label{finalpropprop}
When an increase in bias corresponds to an increase in $L$'s payoff from conviction, $v$, or a decrease in penalty from convicting an innocent defendant, $\rho$, the probability that an innocent $D$ is convicted is increasing in $L$'s bias. 

When an increase in bias corresponds to a decrease in the weight the $DM$ places on getting the decision correct, $\eta$, the probability that an innocent $D$ is convicted is increasing in $L$'s bias if and only if $v$ and $\rho$ are sufficiently large and the conviction threshold, $a$, is sufficiently small ($v \geq a - \rho \left(1-a\right)$).
\end{proposition}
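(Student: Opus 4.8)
The plan is to reduce the entire statement to a single comparative-statics exercise on the stopping belief $b$. Under Assumption \ref{ass2} the interior solution is the binary distribution supported on $\{b,a\}$, with the high posterior pinned at the $DM$'s threshold $a$ and $b$ the unique root of $s(\cdot,a)$ from Expression \ref{triangle}. Since both parties share the correct prior $\mu$ here, Bayes' law (exactly as in the Equations in \ref{eq2}, now with $a_L=a$ and $\mu_L=\mu$) gives the true probability of convicting an innocent defendant as
\[\lambda = \frac{1-a}{1-\mu}\,\frac{\mu-b}{a-b},\]
where $a$ is the exogenous $DM$ threshold, held fixed as we vary $L$'s preference parameters $v$, $\rho$, $\eta$. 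Differentiating, $\partial\lambda/\partial b = \tfrac{1-a}{1-\mu}\cdot\tfrac{\mu-a}{(a-b)^2}<0$ because $\mu<a$; hence $\lambda$ is strictly decreasing in $b$, and the probability of convicting an innocent defendant rises with $L$'s bias exactly when $b$ falls with $L$'s bias.

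Next I would apply the implicit function theorem to $s(b,a)=0$. Differentiating Expression \ref{triangle} in $b$ gives $\partial s/\partial b=(a-b)\phi''(b)>0$, using $a>b$ and strict convexity of $\phi$; this both confirms uniqueness of the root and fixes the sign of the denominator in every comparative static. It then remains only to sign the three partials $\partial s/\partial v=1-\eta$, $\partial s/\partial\rho=-\eta(1-a)$, and $\partial s/\partial\eta=a-\rho(1-a)-v$. Because $db/d(\cdot)=-(\partial s/\partial(\cdot))/(\partial s/\partial b)$, and recalling that an increase in bias corresponds to an increase in $v$, a decrease in $\rho$, or a decrease in $\eta$, the three conclusions can be read off directly.

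Concretely, for $v$ (with $\eta\in(0,1)$) we have $\partial s/\partial v>0$, so $db/dv<0$: raising $v$ lowers $b$ and raises $\lambda$. For $\rho$ we have $\partial s/\partial\rho<0$, so $db/d\rho>0$: lowering $\rho$ lowers $b$ and raises $\lambda$. For $\eta$, $db/d\eta>0$ iff $\partial s/\partial\eta<0$ iff $v>a-\rho(1-a)$; since raising bias means lowering $\eta$, $\lambda$ increases in bias iff $b$ increases in $\eta$ iff $v\ge a-\rho(1-a)$—precisely the stated threshold, which indeed requires $v$ and $\rho$ large and $a$ small.

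The only real obstacle is keeping the argument inside the regime where $s(b,a)=0$ genuinely characterizes the solution, i.e. verifying that Assumption \ref{ass2} persists over the relevant parameter range, so that $b\in(0,\mu)$ and the $DM$'s preferences continue to bind (keeping the high posterior at $a$ rather than some $x_H>a$). I would handle this by treating the result as a local statement around a configuration satisfying Assumption \ref{ass2}, checking that the binding inequality $-\eta-\phi'(b)\ge\eta\rho-\phi'(a)$ is preserved under each perturbation (it only slackens as bias grows), and noting that in the corner regimes—no acquisition, or conviction with probability one—the monotonicity in bias is weak and in the same direction, so nothing is lost. The requirement $\eta\in(0,1)$ for the $v$ and $\rho$ cases (at $\eta=1$, $v$ drops out of $s$; at $\eta=0$, $\rho$ drops out) rounds out the statement.
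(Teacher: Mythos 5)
Your proposal is correct and takes essentially the same route as the paper's own proof: reduce the comparative statics of $\lambda$ to those of the stopping belief $b$ via $\lambda = \frac{1-a}{1-\mu}\,\frac{\mu-b}{a-b}$ with $a$ fixed, then sign $b'\left(v\right)$, $b'\left(\rho\right)$, and $b'\left(\eta\right)$ by the implicit function theorem applied to $s$, using $s'\left(b\right) = \left(a-b\right)\phi''\left(b\right) > 0$ and the identical partials $s'\left(v\right) = 1-\eta$, $s'\left(\rho\right) = -\eta\left(1-a\right)$, and $s'\left(\eta\right) = a - \rho\left(1-a\right) - v$. Your closing discussion of the persistence of Assumption \ref{ass2} and of the corner regimes is a careful elaboration of a point the paper handles simply by assuming Assumption \ref{ass2} throughout.
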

It is interesting to contrast this proposition with Theorem \ref{mainthm1}. Here, interpreting a smaller $\eta$ as an increase in $L$'s (preference-based) bias, it is only when $a$ is sufficiently high and $v$ is sufficiently low that an innocent $D$ benefits from bias. Moreover, an innocent $D$ never benefits from bias, when a larger $v$ or a smaller $\rho$ corresponds to a more biasd $L$. But we should not be surprised that preference-based bias has such a different effect to belief-based bias. With preference-based bias, it does not matter so much who ($L$ or the $DM$) is the biased party since in both cases bias works to shift the incentives of $L$. In contrast, if $L$ suffers from belief-based bias, there is no change in her incentives, since she is aware of the $DM$'s lack of bias and tailors the high stopping threshold accordingly. Instead, her stopping problem is altered, unbeknownst to her. Accordingly, it is not surprising that under two of the three interpretations of (preference-based) bias, Propositions \ref{firstpreference} and \ref{finalpropprop} reveal that it does not matter who has the bias: an innocent $D$ is always worse off.

One interesting parallel between Proposition \ref{finalpropprop} and the earlier Proposition \ref{signingbias} is that independent of whether bias is preference- or belief-based, the guilty $D$ is affected in the same manner as the innocent $D$ and hence the unconditional effect of bias also has the same sign. That is, if an innocent $D$ is made worse off by bias (regardless of whether it is preference- or belief-based), so is a guilty $D$. Accordingly, a change in the level of bias effects an exchange of type I errors for type II errors, or vice-versa.

\section{Discussion}

In this paper, we take an agent's bias as given, and explore its ramifications. Perhaps unexpectedly, our results challenge the notion that bias must necessarily be bad for a suspect: a prejudiced law-enforcement may be to suspect's benefit, due to its effect on the evidence-gathering process. On the other hand, a partisan decision-maker (judge, jury) is always worse for a defendant than a fair one. Given these observations, one might suppose that for a fixed level of bias, a defendant should always prefer that law-enforcement be biased, but our findings contest that result as well.

Under a max-min criterion, this paper has clear policy implications: efforts in eliminating prejudice should be applied to decision-makers (judges and juries), since such efforts are guaranteed to improve equity. Whether a reduction in law-enforcement's bias does so is more nebulous.\footnote{Of course, there are other benefits from having a less prejudiced law-enforcement that our model does not capture. For instance, police interactions occasionally turn violent (even deadly), and our framework does not address the consequences of bias along this dimension.}

Furthermore, although the motivation for this paper was the specific legal environment that we explore, our findings apply to other settings in which there is delegated (and motivated) costly information acquisition. For instance, we can replace law enforcement with a pharmaceutical company who wants its drug approved by a regulator ($DM$), and who runs costly trials to obtain evidence about the drug's effectiveness. Our framework allows us to explore how the approval rate of ineffective (or dangerous drugs)--corresponding to the low state $\theta = -1$--is affected by the subjective beliefs of the two parties. Similarly, a headhunter in a CEO search can take the role of law enforcement, and the firm looking for a CEO would be the decision-maker. In this scenario, our results shed light on the effects of discrimination both belief- and, thanks to Section \ref{whatkindofbias}, taste-based.

One final comment: the perspicacious reader may have picked up on the fact that \textit{the true prior, $\mu$, is irrelevant} in our model. Although we specify that the prior of one of the actors is correct and that the other is biased (has a higher prior) this is just to ease presentation and discussion. The conditional probabilities that innocent and guilty defendants are convicted depend \textit{only} on the subjective beliefs held by the decision-maker and law enforcement. Naturally, the unconditional probability that a defendant is convicted is affected by the true likelihood of guilt.

\bibliography{sample.bib}

\appendix

\section{Omitted Proofs}

\subsection{Lemma \ref{when2investigate} Proof}
\begin{proof}
A direct calculation reveals that $\ubar{d}'\left(\mu_{L}\right)$ has the same sign as \[\left(1-a\right)\mu\left(1-2\mu_{L}\right) - \left(a-\mu\right)\mu_{L}^2 \eqqcolon t\left(\mu_{L}\right) \text{ .}\]
Directly, \[t'\left(\mu_{L}\right) = -2 \left[\left(1-a\right)\mu + \left(a-\mu\right) \mu_{L}\right] < 0 \text{ .}\]
Thus,
\[t\left(\mu_{L}\right) \leq t\left(\mu\right) = \left(1-\mu\right)\mu\left(1-\mu-a\right) \text{ ,}\] strictly if and only if $\mu_L > \mu$. The right-hand side of this is strictly positive if and only if $1-\mu-a > 0$. Moreover,
\[t\left(\mu_{L}\right) > t\left(1\right) = -a\left(1-\mu\right) < 0 \text{ .}\]

Accordingly, if $1 \leq \mu + a$, $\ubar{d}$ is always decreasing in $\mu_{L}$. If $1 > \mu + a$ then by the intermediate value theorem there exists some $\mu_{L}^{\dagger}$ such that $\ubar{d}$ is strictly increasing in $\mu_{L}$ if and only if $\mu_{L} \in \left(\mu, \mu_{L}^{\dagger}\right)$.
\end{proof}

\subsection{Theorem \ref{mainthm1} Proof}
\begin{proof}
A direct calculation reveals that $\lambda'\left(\mu_{L}\right)$ has the same sign as \[\left(\left(d-2a+1\right)\mu-ad+a\right)\mu_{L}+\left(a-1\right)\left(d+1\right)\mu \eqqcolon f\left(\mu_{L}\right) \text{ .}\]
Then,
\[f'\left(\mu_{L}\right) = \left(1-2a\right)\mu -\left(a-\mu\right)d+a \eqqcolon t\left(d\right) \text{ ,}\]
which is strictly decreasing in $d$ and hence \[t\left(d\right) > t\left(a\right) = \left(\mu+a\right)\left(1-a\right) > 0 \text{ .}\]
Thus, $f$ is strictly increasing in $\mu_{L}$. Moreover, \[f\left(1\right) = a\left(1-\mu\right)\left(1-d\right) > 0 \text{ .}\]
Next,
\[f\left(\mu\right) = -\mu \left(1-\mu\right) \left(1+d-2a\right) \text{ ,}\] which is strictly negative provided $a$ is sufficiently small ($2a < 1+d$). In this case, $\lambda$ is u-shaped in $\mu_L$. Moreover,
\[\lambda\left(\mu\right) = \left(\frac{1-a}{1-\mu}\right)\frac{\mu-a + d}{d}, \quad \text{and} \quad \lambda\left(1\right) = \frac{\mu}{a}\left(\frac{1-a}{1-\mu}\right) \text{ ;}\]
so, $\lambda\left(\mu\right) < \lambda\left(1\right)$ if and only if \[\left(a-\mu\right)\left(a-d\right) > 0 \text{ ,}\]
which holds by assumption. Naturally, if $2a \geq 1+d$, $f$ (and hence $\lambda'$) are always positive.
\end{proof}

\subsection{Proposition \ref{biasinc} Proof}

\begin{proof}
First, $\hat{\lambda}'\left(\mu_{DM}\right)$ has the same sign as
\[\left[\left(1-a_{DM}\right)p_{DM}'\left(a_{DM}\right) - p_{DM}\right]a_{DM}'\left(\mu_{DM}\right) \text{ .}\]
Second, it is easy to derive
\[\tag{$A3$}\label{exp57}a_{DM}'\left(\mu_{DM}\right) = -\frac{\left(1-\mu\right)\mu\left(1-a\right)a}{\left(\left(\mu - \mu_{DM}\right)a + \mu_{DM}\left(1-\mu\right)\right)^2} < 0 \text{ .}\]
Thus, it suffices to show that $p_{DM}'\left(a_{DM}\right) \leq 0$. 

Third, $p_{DM}'\left(a_{DM}\right)$ has the same sign as
\[\label{exa3}\tag{$A2$}-\left(\mu - b_{DM}\right) - \left(a_{DM}-\mu\right)b_{DM}'\left(a_{DM}\right) \text{ .}\]
Fourth, 
\[f'\left(b_{DM}\right) = \left(a_{DM}-b_{DM}\right)\phi''\left(b_{DM}\right) > 0 , \quad \text{and} \quad f'\left(a_{DM}\right) = - \left(\phi'\left(a_{DM}\right)-\phi'\left(b_{DM}\right)\right) < 0 \text{ ,}\]
so by the implicit function theorem, $b_{DM}'\left(a_{DM}\right) > 0$, and so Expression \ref{exa3} and $p_{DM}'\left(a_{DM}\right)$ are negative.
\end{proof}

\subsection{Proposition \ref{lastproposition} Proof}

\begin{proof}
The support of the optimal distribution over posteriors when the $DM$ is biased is $\left\{a_{DM}-d, a_{DM}\right\}$. A direct calculation reveals that 
\[\lambda'\left(\mu_{L}\right) \vert_{\mu_{L} = \mu} > \hat{\lambda}'\left(\mu_{DM}\right) \vert_{\mu_{DM} = \mu} \quad \Leftrightarrow \quad h\left(d\right) \coloneqq 4a^{2} - 3 a \mu - 2 a + \mu + \left(\mu - 2 a\right) d > 0 \text{ .}\]
Since $a > \mu$, $h$ is obviously decreasing in $d$. Thus, $h\left(d\right)$ is strictly positive if and only if 
\[d < \frac{4a^{2} - 3 a \mu - 2 a + \mu}{2a - \mu} \eqqcolon \bar{d}\text{ .}\]
Note that by assumption $d > a - \mu \ \left( > 0\right)$. Accordingly, we need %\[a - \mu < \frac{4a^{2} - 3 a \mu - 2 a + \mu}{2a - \mu} \text{ ,}\] or 
\[a > \frac{1+\sqrt{1-2\mu+2\mu^{2}}}{2} \eqqcolon \bar{a} \text{ .}\]
By ``$a$ sufficiently high'' and ``$d$ sufficiently small'' in the proposition's statement, we mean $a > \bar{a}$ and $d < \bar{d}$. Next, $\lambda\left(\mu_B\right) - \hat{\lambda}\left(\mu_B\right)$ has the same sign as
\[\begin{split}
      f\left(\mu_B\right) &\coloneqq \mu_B\left(\left(\mu-a\right)\mu_B+\left(a-1\right)\mu\right)^2\left(\left(\mu^2+\left(d-1\right)\mu+\left(a-1\right)d\right)\mu_B-a\mu^2+\left(a-ad\right)\mu\right)\\ &-\mu\left(\left(\mu+a-1\right)\mu_B-a\mu\right)^2\left(\left(\mu-a\right)\mu_B^2+\left(d-1\right)\left(\mu-a\right)\mu_B+\left(a-1\right)d\mu\right) \text{ .}
\end{split}\]
Obviously, $f\left(\mu\right) = 0$. Moreover, 
\[f\left(a\right) = \left(a-1\right)a^3\left(\mu-a\right)\left(\left(\left(2a-1\right)\mu-a^2\right)d-\left(a-1\right)^2\mu\right) \eqqcolon g\left(d\right) \text{ .}\]
Next,
\[g'\left(d\right) = \left(1-a\right)a^3\left(a-\mu\right)\left(\left(2a-1\right)\mu-a^2\right) < 0 \text{ ,}\] since $a > \mu$. Thus, 
\[g\left(d\right) < g\left(0\right) = -\left(1-a\right)^3a^3\left(a-\mu\right)\mu < 0 \text{ ,}\]
so $f\left(a\right) < 0$. Directly, 
\[\frac{\partial{f}}{\partial{d}} = \underbrace{\left(\mu_{B}-\mu\right)}_{\geq 0}\underbrace{\left(\left(\mu-a\right)\mu_{B}+\left(a-1\right)\mu\right)}_{< 0}\underbrace{\left(\left(\mu-a\right)\mu_{B}-a\mu\right)}_{< 0}\underbrace{\left(\left(\mu+a-1\right)\mu_{B}-a\mu\right)}_{< 0} \leq 0 \text{ ,}\]
strictly so when $\mu_B > \mu$. Consequently, when $d \geq \bar{d}$ and $\mu_B > \mu$, $f\left(\mu_B, d\right) < f\left(\mu_B, \bar{d}\right)$, which has the same sign as
\[\begin{split}
    - \bigg(\left(\mu-a\right)&\left(\mu\left(a\left(\mu-a+2\right)-1\right)+2\left(a-1\right)a\left(2a-1\right)\right)\mu_B^2 +\left(2a^2-4a+1\right)\mu^2\left(\mu-a\right)\mu_B\\ &-\left(a-1\right)a\left(3a-1\right)\mu^3+2\left(a-1\right)a^2\left(2a-1\right)\mu^2\bigg)
\end{split} \text{ .}\]
This expression is strictly decreasing in $\mu_B$ and equals $0$ when $\mu_B = \mu$. Thus, when $d \geq \bar{d}$, $f$ is strictly negative for all $\mu_B \in \left(\mu,a\right)$. On the other hand, when $d < \bar{d}$, we need merely show that $f\left(\mu_B\right)$ has at most one real root in the interval $\left(\mu,a\right)$. But this is clear since $f$ is a quartic, $f\left(\mu\right) = 0$, $f'\left(\mu_B\right)\vert_{\mu = \mu_B} > 0$, $f\left(a\right) <0$, and $f\left(0\right) = a^2\left(1-a\right) d \mu^2 \geq 0$ (\textit{viz.}, $f$ would have to have three roots in $\left(\mu, a\right)$, which is impossible).\iffalse 
\begin{claim}\label{lastclaim}
$f\left(\mu_B\right)$ has at most one real root in the interval $\left(\mu,a\right)$.
\end{claim}
\begin{proof}
Obviously, $f''$ is quadratic in $\mu_B$. $f''\left(\mu\right)$ has the same sign as
\[-\left(\mu\left(3\left(d-3a+1\right)\mu-\left(5a+2\right)d+\left(2a+1\right)\left(5a-2\right)\right)+3a\left(d-2a+1\right)\right) \text{ ,}\]
which is either negative for all $\mu \in \left[a - d, a\right]$ or negative then positive. Likewise, $f''\left(a\right)$ is either negative for all $\mu \in \left[a - d, a\right]$ or negative then positive

It suffices to show that on $\left(\mu,a\right)$, $f$ is either concave or concave then convex.

This can also be verified analytically using Mathematica. The code (using $m$ for $\mu$) is
\begin{lstlisting}
Reduce[Exists[{x, y}, 
  x (-a m (-1 + d + m) + a d x + (-1 + m) (d + m) x) (a x - 
        m (-1 + a + x))^2 - 
     m (a m + x - (a + m) x)^2 (-a d x - (a - m) (-1 + x) x + 
        d m (-1 + a + x)) == 0 && 
   y ((-1 + a) m + (-a + m) y)^2 ((a - a d) m - 
        a m^2 + ((-1 + a) d + (-1 + d) m + m^2) y) - 
     m (-a m + (-1 + a + m) y)^2 ((-1 + a) d m + (-1 + d) (-a + 
           m) y + (-a + m) y^2) == 0 && m < x < y < a && 
   0 < m < a < 1 && a > d && m > a - d], {a, m, d}, Reals]
\end{lstlisting}
\end{proof}
We have established necessary and sufficient conditions for $h\left(d\right)$ to be strictly positive. This plus Claim \ref{lastclaim} establishes the result.\fi
\end{proof}

\subsection{Theorem \ref{bigtheorem} Proof}
\begin{proof}
Let us begin with a sequence of claims. First,
\begin{claim}\label{newclaim1}
$b_L$ is strictly increasing in $\mu_L$.
\end{claim}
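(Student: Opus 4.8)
The plan is to show that $\mu_L$ enters the first-order condition \ref{eq1} defining $b_L$ \emph{only} through the effective threshold $a_L$, and then to compose two monotonicities: $b_L$ increasing in $a_L$, and $a_L$ increasing in $\mu_L$. Since the optimal distribution is unique (by strict convexity of $\phi$), $b_L(\mu_L)$ is a well-defined differentiable function and the implicit function theorem applies.

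First I would make the crucial reduction. Because we have restricted throughout to uniformly posterior separable costs (Definition \ref{assu1}), and because each of the three hypotheses of Theorem \ref{bigtheorem} lives inside that class, $\phi$ is additively separable in prior and posterior, i.e. $\phi(x) = \omega(x) - \omega(\mu_L)$ for a strictly convex potential $\omega$. Hence $\phi'(x) = \omega'(x)$ carries no dependence on $\mu_L$, and the prior-dependent constant $\omega(\mu_L)$ cancels in the difference $\phi(b_L) - \phi(a_L)$ that appears in \ref{eq1}. Consequently the defining relation $h(b_L,\mu_L) = 0$ can be rewritten purely as a function of the pair $(b_L, a_L)$ — in fact as $f(b_L, a_L) = 0$, with $f$ exactly the map already introduced in \ref{eq2n2}. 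This removes every direct channel through which $\mu_L$ could act, leaving $a_L$ as the sole conduit.

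Next I would reuse verbatim the implicit-function computation from the proof of Proposition \ref{biasinc}. Differentiating $f$ gives $\partial_b f = (a_L - b_L)\phi''(b_L) > 0$ (strict convexity, $b_L < a_L$) and $\partial_a f = \phi'(b_L) - \phi'(a_L) < 0$ (since $\phi'$ is increasing and $b_L < a_L$), so the implicit function theorem yields $\tfrac{db_L}{da_L} = -\partial_a f / \partial_b f > 0$; thus $b_L$ is strictly increasing in the threshold $a_L$. I would then invoke the already-established monotonicity of $a_L$ in $\mu_L$ — immediate from differentiating the closed form $a_L = \tfrac{(1-\mu)\mu_L}{(\mu_L - \mu)a + \mu(1-\mu_L)}a$, which is strictly increasing in $\mu_L$. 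Composing by the chain rule, $\tfrac{db_L}{d\mu_L} = \tfrac{db_L}{da_L}\cdot\tfrac{da_L}{d\mu_L} > 0$, which is the claim.

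The step requiring the most care is the first: justifying that the first-order condition has no residual direct dependence on $\mu_L$. This is precisely where the hypotheses of Theorem \ref{bigtheorem} matter, since the variance cost of the earlier sections does \emph{not} fit this template — its $\phi'$ depends on $\mu_L$. As a consistency check, note that for the variance cost one instead has the closed form $b_L = a_L - d$ with $d$ constant, giving $\tfrac{db_L}{d\mu_L} = \tfrac{da_L}{d\mu_L} > 0$ directly and agreeing with the claim. Once the reduction is secured, the remainder is simply a reprise of the Proposition \ref{biasinc} argument, so I expect no further difficulty.
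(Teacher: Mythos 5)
Your proposal is correct and takes essentially the same route as the paper: the paper's own proof applies the implicit function theorem to the first-order condition $h(b_L,\mu_L)=0$ and obtains $b_L'\left(\mu_L\right) = \frac{\left(\phi'\left(a_L\right)-\phi'\left(b_L\right)\right)a_L'\left(\mu_L\right)}{\left(a_L-b_L\right)\phi''\left(b_L\right)} > 0$, which is precisely your two monotonicities ($b_L$ increasing in $a_L$, $a_L$ increasing in $\mu_L$) composed by the chain rule. Your only addition is to make explicit the uniform-posterior-separability cancellation guaranteeing that $\mu_L$ enters the first-order condition solely through $a_L$ --- a step the paper leaves implicit --- which is a harmless (indeed clarifying) elaboration rather than a different argument.
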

\begin{proof}
Recall that $b_L$ is implicitly defined as the root to $h$ given in Expression \ref{eq1}. Then, \[\tag{$A4$}\label{eqa7}b_{L}'\left(\mu_{L}\right) = - \frac{h'\left(\mu_{L}\right)}{h'\left(b_{L}\right)} = \frac{\left(\phi'\left(a_L\right) - \phi'\left(b_{L}\right)\right)a_L'\left(\mu\right)}{\left(a_L-b_{L}\right)\phi''\left(b_{L}\right)} > 0 \text{ ,}\]
since $a_L'\left(\mu_L\right) > 0$.
\end{proof}
Second,
\begin{claim}\label{newclaim2}
$b_L$ is strictly decreasing in $v$.
\end{claim}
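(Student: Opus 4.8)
The plan is to apply the implicit function theorem to the defining equation $h\left(b_L,\mu_L\right)=0$ from Expression \ref{eq1}, this time differentiating with respect to the reward ratio $v$ rather than $\mu_L$. Recall that $h\left(b_L,\mu_L\right) = \phi\left(b_L\right) + \left(a_L-b_L\right)\phi'\left(b_L\right) + v - \phi\left(a_L\right)$, and that $a_L$ depends only on $\mu_L$, not on $v$. The first thing to notice is that $v$ enters $h$ linearly and does not interact with any other term, so $\partial h/\partial v = 1$. This is the entire content of the numerator, and the only point that genuinely requires a moment's care: one must confirm that $a_L$ is independent of $v$, so that differentiating $-\phi\left(a_L\right)$ in $v$ contributes nothing and the numerator is the clean constant $1$.

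For the denominator, I would simply reuse the computation already carried out in the proof of Claim \ref{newclaim1}. Differentiating $h$ with respect to $b_L$ (holding $a_L$ fixed, since $a_L$ is a function of $\mu_L$ alone) causes the two $\phi'\left(b_L\right)$ terms to cancel, leaving
\[\frac{\partial h}{\partial b_L} = \left(a_L-b_L\right)\phi''\left(b_L\right) > 0 \text{ ,}\]
where strict positivity follows from the strict convexity of $\phi$ (so $\phi'' > 0$) together with $b_L < \mu_L < a_L$, which gives $a_L - b_L > 0$.

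Combining the two pieces via the implicit function theorem yields
\[b_L'\left(v\right) = -\frac{\partial h/\partial v}{\partial h/\partial b_L} = -\frac{1}{\left(a_L-b_L\right)\phi''\left(b_L\right)} < 0 \text{ ,}\]
which is exactly the claim. I do not expect any real obstacle here: the argument is a one-line application of the implicit function theorem, and both the sign of the denominator and the value of the numerator are immediate once the dependence structure ($a_L$ a function of $\mu_L$ only, $v$ entering additively) is made explicit. It is worth emphasizing that this argument does not invoke the variance-cost specialization; it holds for any twice-continuously-differentiable, strictly convex $\phi$, so the claim is in fact more general than the surrounding theorem's hypotheses require.
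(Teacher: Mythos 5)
Your proof is correct and follows exactly the paper's argument: the paper likewise notes $h'\left(v\right) = 1$ and appeals to the implicit function theorem, with the sign of $\partial h/\partial b_L = \left(a_L - b_L\right)\phi''\left(b_L\right) > 0$ already established in the proof of Claim \ref{newclaim1}. Your version simply makes explicit the details the paper leaves implicit (that $a_L$ is independent of $v$, and that no variance-cost specialization is needed), so there is nothing to correct.
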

\begin{proof}
Directly, $h'\left(v\right) = 1$, so we may again appeal to the implicit function theorem to get the result.
\end{proof}
Third,
\begin{claim}\label{newclaim3}
$b_{DM}$ is strictly decreasing in $\mu_{DM}$.
\end{claim}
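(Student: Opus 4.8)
The plan is to obtain this claim almost for free by chaining together two monotonicity facts that were already established in the proof of Proposition \ref{biasinc}. Recall that $b_{DM}$ is defined only implicitly, as the root of the function $f\left(b_{DM}, a_{DM}\right)$ given in Expression \ref{eq2n2}, and that $a_{DM}$ is itself a function of $\mu_{DM}$. So the natural strategy is to decompose the dependence of $b_{DM}$ on $\mu_{DM}$ through the intermediate variable $a_{DM}$ via the chain rule:
\[
b_{DM}'\left(\mu_{DM}\right) = b_{DM}'\left(a_{DM}\right) \, a_{DM}'\left(\mu_{DM}\right) \text{ ,}
\]
and then sign each factor separately.

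For the first factor, I would invoke the implicit function theorem applied to $f\left(b_{DM}, a_{DM}\right) = 0$, exactly as in the proof of Proposition \ref{biasinc}. There we computed $f'\left(b_{DM}\right) = \left(a_{DM}-b_{DM}\right)\phi''\left(b_{DM}\right) > 0$ (using strict convexity of $\phi$ and $b_{DM} < a_{DM}$) and $f'\left(a_{DM}\right) = -\left(\phi'\left(a_{DM}\right)-\phi'\left(b_{DM}\right)\right) < 0$, whence $b_{DM}'\left(a_{DM}\right) = -f'\left(a_{DM}\right)/f'\left(b_{DM}\right) > 0$. For the second factor, I would simply quote Expression \ref{exp57}, which gives $a_{DM}'\left(\mu_{DM}\right) < 0$ directly. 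Multiplying a positive quantity by a negative one yields $b_{DM}'\left(\mu_{DM}\right) < 0$, which is the claim.

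There is no real obstacle here, since both ingredients are already in hand; the only thing to be careful about is making the logic of the decomposition explicit, namely that the $DM$'s bias affects $b_{DM}$ \emph{solely} through the effective threshold $a_{DM}$ (the cost $\phi$ and reward $v$ in Expression \ref{eq2n2} do not depend on $\mu_{DM}$ once $a_{DM}$ is fixed). This is precisely the economic content highlighted after Proposition \ref{biasinc}: an increase in the $DM$'s bias is felt by $L$ only as a reduction in the conviction threshold. Intuitively, a lower threshold makes conviction easier, so $L$ is willing to keep searching from a more pessimistic position before giving up, driving her lower stopping belief $b_{DM}$ downward as well.
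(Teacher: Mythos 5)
Your proposal is correct and is essentially the paper's own argument: the paper's Equation \ref{eqa8} computes $b_{DM}'\left(\mu_{DM}\right) = \left(\phi'\left(a_{DM}\right)-\phi'\left(b_{DM}\right)\right)a_{DM}'\left(\mu_{DM}\right)/\left(\left(a_{DM}-b_{DM}\right)\phi''\left(b_{DM}\right)\right)$, which is exactly your chain-rule factorization $b_{DM}'\left(a_{DM}\right)\,a_{DM}'\left(\mu_{DM}\right)$ written in one step, with the same appeal to the implicit function theorem (reusing the signs of $f'\left(b_{DM}\right)$ and $f'\left(a_{DM}\right)$ from the proof of Proposition \ref{biasinc}) and to Expression \ref{exp57} for $a_{DM}'\left(\mu_{DM}\right) < 0$. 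Your explicit remark that $\mu_{DM}$ enters only through $a_{DM}$ is a correct and worthwhile clarification of what the paper leaves implicit, but it does not change the substance.
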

\begin{proof}
Recall that $b_{DM}$ is implicitly defined as the root to $f$ given in Expression \ref{eq2n2}. Then,
\[\tag{$A5$}\label{eqa8}b_{DM}'\left(\mu_{DM}\right) = - \frac{f'\left(\mu_{DM}\right)}{f'\left(b_{DM}\right)} = \frac{\left(\phi'\left(a_{DM}\right) - \phi'\left(b_{DM}\right)\right)a_{DM}'\left(\mu_{DM}\right)}{\left(a_{DM}-b_{DM}\right)\phi''\left(b_{DM}\right)} < 0 \text{ ,}\]
since $a_{DM}'\left(\mu_{DM}\right) < 0$ (see Expression \ref{exp57}).
\end{proof}
Fourth, 
\begin{claim}\label{newclaim4}
$b_{DM}$ is strictly decreasing in $v$.
\end{claim}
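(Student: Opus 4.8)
The plan is to mirror the proof of Claim \ref{newclaim2} and invoke the implicit function theorem on the function $f$ defined in Expression \ref{eq2n2}. Recall that $b_{DM}$ is pinned down as the root of
\[f\left(b_{DM}, a_{DM}\right) = \phi\left(b_{DM}\right) + \left(a_{DM} - b_{DM}\right)\phi'\left(b_{DM}\right) + v - \phi\left(a_{DM}\right) = 0 \text{ .}\]
The reward ratio $v$ enters this expression only through the single additive term $+v$, so that $\partial f / \partial v = 1$. Crucially, the threshold $a_{DM}$ is a function of $\mu$, $\mu_{DM}$, and $a$ alone and does \emph{not} depend on $v$; hence as we vary $v$ we may treat $a_{DM}$ as a fixed constant, and no indirect channel through $a_{DM}$ arises to complicate the comparative static.

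Next I would recall the sign of $f'\left(b_{DM}\right)$ already computed in the proof of Proposition \ref{biasinc}, namely $f'\left(b_{DM}\right) = \left(a_{DM} - b_{DM}\right)\phi''\left(b_{DM}\right) > 0$, which is positive because $a_{DM} > b_{DM}$ and $\phi$ is strictly convex. Applying the implicit function theorem then yields
\[b_{DM}'\left(v\right) = -\frac{\partial f / \partial v}{f'\left(b_{DM}\right)} = -\frac{1}{\left(a_{DM} - b_{DM}\right)\phi''\left(b_{DM}\right)} < 0 \text{ ,}\]
which is the desired conclusion.

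Since both ingredients---the trivial partial derivative $\partial f/\partial v = 1$ and the positivity of $f'\left(b_{DM}\right)$ from the Proposition \ref{biasinc} proof---are already in hand, there is no genuine obstacle here; the claim is an immediate parallel to Claim \ref{newclaim2}. The only point requiring a moment's care is confirming that $a_{DM}$ is truly independent of $v$, so that the computation is a clean single-variable application of the implicit function theorem rather than a total derivative carrying several terms. Once that observation is made, the sign follows directly from strict convexity of $\phi$ and the ordering $a_{DM} > b_{DM}$.
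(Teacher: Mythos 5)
Your proof is correct and follows the paper's argument exactly: the paper likewise observes that $f'\left(v\right) = 1$ and appeals to the implicit function theorem, relying on $f'\left(b_{DM}\right) = \left(a_{DM}-b_{DM}\right)\phi''\left(b_{DM}\right) > 0$ from the proof of Proposition \ref{biasinc}. Your added remark that $a_{DM}$ is independent of $v$ is accurate and makes explicit a point the paper leaves implicit.
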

\begin{proof}
Directly, $f'\left(v\right) = 1$, so we may again appeal to the implicit function theorem to get the result.
\end{proof}
Using Claim \ref{newclaim2}, it is easy to see that for any $\mu_L$ and $a$, there exists an interior solution if and only if $v$ is sufficiently high. The same is true when the $DM$ is biased (using Claim \ref{newclaim4}). We define $v_{\epsilon}$ to be this minimal payoff when $a = 1-\epsilon$ and $\mu_B = \mu$. Formally,
\[v_{\epsilon} \coloneqq \inf\left\{v\colon b < \mu, a = 1-\epsilon, \mu_B = \mu, \ b \ \text{is the solution to} \ h\left(b,a,v\right) = 0\right\} \text{ .}\]
Next,
\begin{claim}\label{newclaim5}
For all $v > v_{\epsilon}$, the solutions are interior with $b_L < \mu_L$ and $b_{DM} < \mu$ for all $\mu_B$ sufficiently close to $\mu$.
\end{claim}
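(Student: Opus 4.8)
The plan is to reduce both assertions to the behavior of a single zero-bias solution and then invoke the monotonicity facts already in hand. First I would observe that at $\mu_B = \mu$ the two scenarios coincide: substituting $\mu_L = \mu$ into the expression for $a_L$, and $\mu_{DM} = \mu$ into the expression for $a_{DM}$, both thresholds collapse to $a_L = a_{DM} = a$, and the defining first-order conditions $h$ (Expression \ref{eq1}) and $f$ (Expression \ref{eq2n2}) reduce to the identical equation $0 = \phi(b) + (a-b)\phi'(b) + v - \phi(a)$. Hence there is a single common root $b^{\ast}(v) \coloneqq b_L(\mu) = b_{DM}(\mu)$, which is $C^1$ in $v$ by the implicit function theorem (since $h'(b) = (a-b)\phi''(b) > 0$ by strict convexity) and strictly decreasing in $v$ by Claims \ref{newclaim2} and \ref{newclaim4}. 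Because $b^{\ast}$ is continuous and strictly decreasing in $v$, continuity gives $b^{\ast}(v_\epsilon) = \mu$, and so for every $v > v_\epsilon$ we obtain the strict inequality $b^{\ast}(v) < \mu$.

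Next I would dispatch the $DM$-biased case directly from monotonicity. Claim \ref{newclaim3} states that $b_{DM}$ is strictly decreasing in $\mu_{DM}$, so for any $\mu_{DM} \geq \mu$ we get $b_{DM}(\mu_{DM}) \leq b^{\ast}(v) < \mu$. It then remains only to confirm genuine interiority from above, i.e.\ $\mu < a_{DM}$; since $a_{DM} > \mu$ is equivalent to $\mu_{DM} < a$ (as noted before Proposition \ref{biasinc}), this holds on a full neighborhood of $\mu$, indeed for all $\mu_{DM} \in [\mu, a)$. Thus the $DM$-biased solution is interior with $b_{DM} < \mu$ for all $\mu_{DM}$ near $\mu$.

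The hard part is the $L$-biased case, precisely because Claim \ref{newclaim1} tells us $b_L$ is \emph{increasing} in $\mu_L$ — the same direction as $\mu_L$ itself — so monotonicity alone cannot rule out $b_L$ overtaking $\mu_L$ (which would be the no-information corner), and only a local argument is available. This is exactly what forces the ``sufficiently close'' qualifier. I would define the gap $g(\mu_L) \coloneqq \mu_L - b_L(\mu_L)$ and note $g(\mu) = \mu - b^{\ast}(v) > 0$ by the first step. Since $b_L$ is $C^1$ in $\mu_L$ (the implicit function theorem applies to the root of $h$, using $h'(b_L) = (a_L - b_L)\phi''(b_L) > 0$ together with the derivative formula in Expression \ref{eqa7}), the map $g$ is continuous, hence $g(\mu_L) > 0$ on a neighborhood of $\mu$; that is, $b_L(\mu_L) < \mu_L$ there. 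Interiority from above, $\mu_L < a_L$, likewise holds near $\mu$ since $a_L = a > \mu$ at $\mu_L = \mu$ and $a_L$ varies continuously in $\mu_L$. Intersecting the two neighborhoods produces a single range of $\mu_B$ on which both scenarios are interior with $b_L < \mu_L$ and $b_{DM} < \mu$, which establishes the claim.
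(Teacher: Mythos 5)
Your proposal is correct and takes essentially the same route as the paper's own proof: the paper also dispatches the $DM$-biased case immediately via the monotonicity of $b_{DM}$ in $\mu_{DM}$ (Claim \ref{newclaim3}), and handles the $L$-biased case by a local continuity argument around the zero-bias root $b < \mu$ (choosing $\eta = (\mu-b)/2$ to get $b_L \leq b + (\mu-b)/2 < \mu < \mu_L$, which is your gap function $g(\mu_L) = \mu_L - b_L(\mu_L) > 0$ in slightly different clothing). Your additional verifications --- that the two first-order conditions coincide at $\mu_B = \mu$ with a common root $b^{\ast}(v)$, and that interiority from above ($\mu_L < a_L$ and $\mu < a_{DM}$) holds near $\mu$ --- are a bit more explicit than the paper's proof but do not constitute a different argument.
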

\begin{proof}
Let $v > v_{\epsilon}$. By the continuity of $b_L$ in $\mu_L$, for any $\eta \in \left(0,\mu-b\right)$, there exists a $\delta > 0$ such that if $\mu_L \leq \mu + \delta$, $b_L \leq b + \eta$. By definition $b < \mu$, so let us set $\eta \coloneqq \left(\mu-b\right)/2$. Thus, we get $b_L \leq b + \left(\mu-b\right)/2 < \mu < \mu_L$.

Claim \ref{newclaim3} immediately implies the second part of this claim.
\end{proof}
Next, explicitly, \[\label{a1a}\tag{$A1$}\begin{split}
    \lambda'\left(\mu_{L}\right) &= \left(\frac{1 - a_{L} - \left(1-\mu_{L}\right)a_{L}'\left(\mu_{L}\right)}{\left(1-\mu_{L}\right)^2}\right)p_{L} + \frac{1-a_{L}}{1-\mu_{L}}p_{L}'\left(\mu_{L}\right)\\
    &= \underbrace{-\frac{\left(1-a\right)\mu\left(a-\mu\right)}{\left(\left(\mu_{L} - \mu\right)a + \mu\left(1-\mu_{L}\right)\right)^2}p_{L}}_{ < 0} + \frac{\left(1-a\right)\mu}{\left(\mu_{L} - \mu\right)a + \mu\left(1-\mu_{L}\right)}p_{L}'\left(\mu_{L}\right)
\end{split} \text{ .}\]
By definition, when $v = v_\epsilon$, $b = \mu$. Accordingly, from Expression \ref{a1a}, when $v = v_\epsilon$
\[\lambda'\left(\mu_{L}\right)\vert_{\mu_{L}=\mu} = \frac{1-a}{1-\mu}p_{L}'\left(\mu_{L}\right)\vert_{\mu_{L}=\mu} \text{ ;}\]
that is, the first term vanishes. Analogously, 
\[\hat{\lambda}'\left(\mu_{DM}\right)\vert_{\mu_{DM}=\mu} = \frac{1-a}{1-\mu}p_{DM}'\left(\mu_{DM}\right)\vert_{\mu_{DM}=\mu} \text{ .}\]
Directly,
\[\begin{split}
    p_{L}'\left(\mu_{L}\right)\vert_{\mu_{L} = \mu} &= \frac{a\left(1-b_{L}'\left(\mu_L\right)\vert_{\mu_L = \mu}\right)-b\left(1-a_{L}'\left(\mu_L\right)\vert_{\mu_L = \mu}\right) - \mu_L\left(a_{L}'\left(\mu_L\right)\vert_{\mu_L = \mu}-b_{L}'\left(\mu_L\right)\vert_{\mu_L = \mu}\right)}{\left(a-b\right)^2}\\
    &= \frac{1-b_{L}'\left(\mu_L\right)\vert_{\mu_L = \mu}}{a-\mu}\\
\end{split}\text{ ,}\]
when $v = v_e$. Similarly, 
\[\begin{split}
    p_{DM}'\left(\mu_{DM}\right)\vert_{\mu_{DM} = \mu} &= \frac{-a b_{DM}'\left(\mu_{DM}\right)\vert_{\mu_{DM} = \mu} + b a_{DM}'\left(\mu_{DM}\right)\vert_{\mu_{DM} = \mu} - \mu\left(a_{DM}'\left(\mu_{DM}\right)\vert_{\mu_{DM} = \mu}-b_{DM}'\left(\mu_{DM}\right)\vert_{\mu_{DM} = \mu}\right)}{\left(a-b\right)^2}\\
    &= -\frac{b_{DM}'\left(\mu_{DM}\right)\vert_{\mu_{DM} = \mu}}{a-\mu}
\end{split} \text{ ,}\]
when $v = v_e$. Since $a < 1$, $\lambda'\left(\mu_{L}\right)\vert_{\mu_{L}=\mu}-\hat{\lambda}'\left(\mu_{DM}\right)\vert_{\mu_{DM}=\mu}$ has the same sign as $p_{L}'\left(\mu_{L}\right)\vert_{\mu_{L}=\mu} - p_{DM}'\left(\mu_{DM}\right)\vert_{\mu_{DM}=\mu}$, which has the same sign as \[1-b_{L}'\left(\mu_L\right)\vert_{\mu_L = \mu}+b_{DM}'\left(\mu_{DM}\right)\vert_{\mu_{DM} = \mu} \text{ ,}\]
which equals, using Expression \ref{exp57} and Equations \ref{eqa7} and \ref{eqa8},
\[\label{eqa9}\tag{$A7$}1 - \underbrace{2\left(1-a\right) a\frac{\phi'\left(a\right) - \phi'\left(\mu\right)}{\left(a- \mu\right)\mu\left(1-\mu\right) \phi''\left(\mu\right)}}_{q\left(a\right)} \text{ .}\]
If $\phi'$ is bounded, then so is $\phi''$. Thus, as $a \nearrow 1$, $q\left(a\right) \searrow 0$. On the other hand, if $\phi = \phi_{LL}$ Expression \ref{eqa9} becomes
\[1 - \underbrace{2\left(1-a\right) a\mu\left(1-\mu\right)\frac{\phi_{LL}'\left(a\right) - \phi_{LL}'\left(\mu\right)}{\left(a- \mu\right)}}_{q_{LL}\left(a\right)} \text{ .}\]
As $a \nearrow 1$, $q_{LL} \searrow 2\mu$. Finally, if $\phi = \phi_{T}$, Expression \ref{eqa9} becomes 
\[1 - \underbrace{2\left(1-a\right) a\frac{\phi_{T}'\left(a\right) - \phi_{T}'\left(\mu\right)}{\left(a- \mu\right)}}_{q_{T}\left(a\right)} \text{ .}\]
As $a \nearrow 1$, $q_{T} \searrow 0$.

The result then follows from the continuity of the solution in the parameters.
\end{proof}

\subsection{Proposition \ref{firstpreference} Proof}
\begin{proof}
Here, the conditional probabilities are
\[\gamma = \frac{a}{\mu} p, \quad \text{and} \quad \lambda = \frac{1-a}{1-\mu} p, \quad \text{where} \quad p = \frac{\mu- b}{a - b} \text{ .}\]
Directly, $\lambda'\left(a\right)$ has the same sign as
\[-p + \left(1-a\right)p'\left(a\right) \text{ .}\]
Consequently, we need to show that $p'\left(a\right) \leq 0$. As in the proof of Proposition \ref{biasinc}, it suffices to show that $b'\left(a\right) \geq 0$. To that end, we compute
\[s'\left(b\right) = \left(a-b\right) \phi''\left(b\right) > 0 \text{ ,}\]
and 
\[s'\left(a\right) = \eta\left(\rho+1\right) - \left(\phi'\left(a\right)-\phi'\left(b\right)\right) \leq 0 \text{ ,}\]
by Assumption \ref{ass2}. Finally, by the implicit function theorem, $b'\left(a\right) \geq 0$.
\end{proof}%It suffices to show that $b'\left(a\right) > 0$ \[-\eta =  \eta \rho - \left(\phi'\left(x_{H}\right)- \phi'\left(x_L\right)\right)\] \[-\left(\eta + \phi'\left(x_{L}\right)\right)\left(x_H-x_L\right)\]

\subsection{Proposition \ref{finalpropprop}}
\begin{proof}
$\lambda'\left(v\right)$ has the same sign as $p'\left(v\right)$, which has the same sign as $-b'\left(v\right)$. Analogously, $\lambda'\left(\rho\right)$ has the same sign as $-b'\left(\rho\right)$ and $\lambda'\left(\eta\right)$ has the same sign as $-b'\left(\eta\right)$. From the proof of Proposition \ref{firstpreference}, $s'\left(b\right) > 0$. Directly, $s'\left(v\right) = 1- \eta \geq 0$ and $s'\left(\rho\right) = -\eta \left(1-a\right) \leq 0$. Accordingly, the implicit function theorem yields that $b'\left(v\right) \leq 0$ and $b'\left(\rho\right) \geq 0$.

Directly,
\[s'\left(\eta\right) = a - \rho \left(1-a\right) - v \text{ ,}\]
and appealing to the implicit function theorem, $s'\left(\eta\right)$ has the same sign as $\lambda'\left(\eta\right)$.
\end{proof}
\end{document}